\newtheorem{theorem}{Theorem}[section]
\newtheorem{proposition}[theorem]{Proposition}
\newtheorem{lemma}[theorem]{Lemma}
\newtheorem{corollary}[theorem]{Corollary}
\newtheorem{remark}[theorem]{Remark}
\newtheorem{example}[theorem]{Example}
\newtheorem{definition}[theorem]{Definition}
\def\C{{\mathbb C}}
\def\N{{\mathbb N}}
\def\R{{\mathbb R}}
\def\S{{\mathbb S}}
\def\Z{{\mathbb Z}}
\def\CA{{\mathcal A}}
\def\CC{{\mathcal C}}
\def\CD{{\mathcal D}}
\def\CH{{\mathcal H}}
\def\CO{{\mathcal O}}
\def\CS{{\mathcal S}}
\newcommand{\im}{\operatorname{Im}}
\newcommand{\re}{\operatorname{Re}}
\newcommand{\supp}{\operatorname{supp}}
\newcommand{\op}{\operatorname{Op}}
\newcommand{\ad}{\operatorname{ad}}
\newcommand{\rank}{\operatorname{Rank}}
\def\one{\mathds{1}}
\def\<{\langle}
\def\>{\rangle}
\title[Semiclassical estimates of the cut-off resolvent]
{Semiclassical estimates of the cut-off resolvent for trapping perturbations}
\author[J. F. Bony]{Jean-Fran\c{c}ois Bony}
\address{Institut de math\'ematiques, Universit\'e Bordeaux I, 351 cours de la Lib\'eration, 33405 Talence, France}
\email{bony@math.u-bordeaux1.fr}
\author[V. Petkov]{Vesselin Petkov}
\address{Institut de math\'ematiques, Universit\'e Bordeaux I, 351 cours de la Lib\'eration, 33405 Talence, France}
\email{petkov@math.u-bordeaux1.fr}
\keywords{Resolvent estimate, quantum resonances, semiclassical analysis, resonant states}
\subjclass[2000]{15A42; 35B34; 35J10; 47A10; 81Q20; 81U20}
\begin{document}

\begin{abstract}
This paper is devoted to the study of the cut-off resolvent of a semiclassical ``black box'' operator $P$. We estimate the norm of $\varphi ( P - z )^{- 1} \varphi$, for any $\varphi \in C^{\infty}_{0} ( \R^{n} )$, by the norm of $\one_{\CC_{a,b}} ( P - z )^{- 1} \one_{\CC_{a,b}}$ where $\CC_{a,b} = \{ x \in \R^{n} ; \ a < \vert x \vert < b \}$ and $a \gg 1$. For $z$ in the unphysical sheet with $- M h \vert \ln h \vert \leq \im z \leq 0$, we prove that this estimate holds with a constant $\frac{h}{\vert \im z \vert} e^{C \vert \im z \vert / h}$. We also study the resonant states $u$ of the operator $P$ and we obtain bounds for $\Vert \varphi u \Vert$ by $\Vert \one_{\CC_{a,b}} u \Vert$. These results hold without any assumption on the trapped set nor any assumption on the multiplicity of the resonances.
\end{abstract}

\maketitle

\section{Introduction} \label{s1}

In this paper, we prove estimates on the meromorphic extension across the real axis of the cut-off resolvent of $P$, a semiclassical operator of ``black box'' type. This abstract framework, introduced by Sj\"{o}strand and Zworski \cite{SjZw91_01} and described below, allows one to develop the theory of resonances for many kinds of perturbations (potentials, obstacles, metrics, \dots). In particular, the results stated below hold for arbitrary dimension $n \geq 1$ and without any restriction on the geometry of the trapped set.

More precisely, we will estimate the norm of the cut-off resolvent $\varphi ( P - z )^{- 1} \varphi$, for any $\varphi \in C^{\infty}_{0} ( \R^{n} )$, by the norm of $\one_{\CC_{a,b}} ( P - z )^{- 1} \one_{\CC_{a,b}}$ where
\begin{equation*}
\CC_{a,b} = \{ x \in \R^{n} ; \ a < \vert x \vert < b \} .
\end{equation*}
Notice that, on the real axis, there is a big contrast between the behavior of these two norms. Indeed, the resolvent truncated on rings $\CC_{a,b}$, with $1 \ll a < b$, is always bounded above by $C h^{- 1}$. On the other hand, the norm of the resolvent, truncated near the projection on $\R^{n}$ of the trapped set, depends on the geometry of this set and can be much larger than $h^{- 1}$. For scattering outside a bounded obstacle $K \subset \R^{n}$, with $n \geq 3$ odd, a similar question has been investigated by Stoyanov and the second author \cite{PeSt09_01}. Using the scattering theory of Lax and Phillips \cite{LaPh89_01}, they have proved that the cut-off resolvent can be bounded by the norm of the scattering matrix (we refer to Section \ref{s6} for more details).

In scattering theory, it is natural to consider the resolvent of $P$ truncated in rings $\CC_{a,b}$ far away from the origin. Indeed, the operator $\one_{\CC_{a,b}} ( P - z )^{- 1} \one_{\CC_{a,b}}$ appears in the representation of the scattering amplitude for compact perturbations. More precisely, assume that $P$ is a compactly supported perturbation of $- h^{2} \Delta$ and denote by $S ( z ; h ) = I + K ( z ; h)$ the associated scattering matrix at energy $z$. By definition, the scattering amplitude $a ( z , \omega , \omega^{\prime} ; h )$ is the distribution kernel of $K ( z ; h )$. The standard formula (see for instance, Zworski and the second author \cite{PeZw01_01}) gives
\begin{equation} \label{a52}
a ( z , \omega , \omega^{\prime} ; h ) = c ( z ; h )  \Big\<  e^{i \sqrt{z} \< x, \omega \> / h} , [ h^{2} \Delta , \chi_{1} ] ( P - z )^{- 1} [ h^{2} \Delta , \chi_{2} ] e^{i \sqrt{z} \< x , \omega^{\prime} \> / h} \Big\> ,
\end{equation}
where $\chi_{1} , \chi_{2} \in C_{0}^{\infty} ( \R^{n} )$ are cut-off functions, $\omega , \omega^{\prime} \in \S^{n - 1}$ and 
\begin{equation*}
c ( z , h ) = i \pi ( 2 \pi h )^{- n} z^{\frac{n - 2}{2}} .
\end{equation*}
Moreover, we can take the functions $\chi_{1} , \chi_{2}$ equal to $1$ on arbitrary large compact sets containing the perturbation, and the scattering amplitude is independent of this choice. Thus the estimatation of $\one_{\CC_{a,b}} ( P - z )^{- 1} \one_{\CC_{a,b}}$ with $1 \ll a < b$ is essential for the estimations of the scattering amplitude and for the norm of the Hilbert--Schmidt operator $K ( z ; h )$.

We now give the precise assumptions on the semiclassical ``black box'' operator $P$. This was introduced by Sj\"{o}strand and Zworski \cite{SjZw91_01} (see also Sj\"{o}strand \cite{Sj97_01,Sj01_01,Sj07_01} in the long range case). Let $\CH$ be a complex Hilbert space with an orthogonal decomposition
\begin{equation*}
\CH = \CH_{R_0} \oplus L^2 ( \R^n \setminus B( R_0 ) ) ,
\end{equation*}
with $n \geq 1$, $R_0 > 0$ and $B ( R ) = \{ x \in \R^n ; \ \vert x \vert < R \}$. In the sequel, we will identify $u \in L^2 ( \R^n \setminus B( R_0 ) )$ with $0 \oplus u \in \CH$. We consider a self-adjoint semiclassical operator $P : \CH \longrightarrow \CH$ with domain $\CD$ independent of $h \in ] 0 , 1 ]$. We assume that
\begin{equation*}
\one_{\R^{n} \setminus B ( R_{0} )} \CD = H^{2} ( \R^{n} \setminus B ( R_{0} ) ) ,
\end{equation*}
and conversely that any $u \in H^{2} ( \R^{n} \setminus B ( R_{0}))$, which vanishes near $\partial B ( R_{0} )$, is an element of $\CD$. To treat the contribution of $P$ in $\CH_{R_0}$, we suppose that
\begin{equation*}
\one_{B ( R_{0} )} ( P + i )^{- 1} \text{ is compact.}
\end{equation*}

We also assume that, for all $u \in \CD$, we have
\begin{equation*}
\one_{\R^{n} \setminus B ( R_{0} )} P u = Q ( u \vert_{\R^{n} \setminus B( R_{0} )} ) ,
\end{equation*}
where $Q$ is a self-adjoint semiclassical differential operator on $L^{2} ( \R^{n} )$
\begin{equation} \label{a3}
Q = \sum_{\vert \alpha \vert \leq 2} a_{\alpha} ( x ; h ) ( h D_{x} )^{\alpha} .
\end{equation}
We suppose that the $a_{\alpha}$'s are bounded in $C_{b}^{\infty} ( \R^{n} )$ (the space of smooth functions which are bounded with all their derivatives) when $h$ varies, and that $a_{\alpha} ( x ; h ) = a_{\alpha} ( x )$ is independent of $h$ for $\vert \alpha \vert = 2$. We further assume that $Q$ is elliptic:
\begin{equation}
\sum_{\vert \alpha \vert = 2} a_{\alpha} ( x ) \xi^{\alpha} \gtrsim \xi^{2} ,
\end{equation}
and a long range perturbation of the Laplacian:
\begin{equation} \label{d1}
\sum_{\vert \alpha \vert \leq 2} a_{\alpha} ( x ; h ) \xi^{\alpha} \longrightarrow \xi^{2} ,
\end{equation}
as $\vert x \vert \to + \infty$ uniformly with respect to $h$. Finally, we assume that
\begin{equation} \label{a50}
a_{\alpha} ( x ; h ) = a_{\alpha}^{0} ( x ) + h a_{\alpha}^{1} ( x ; h ) ,
\end{equation}
where $a_{\alpha}^{0} , a_{\alpha}^{1} \in C_{b}^{\infty} ( \R^{n} )$ uniformly with respect to $h$. We denote by
\begin{equation} \label{d2}
q ( x , \xi ) = \sum_{\vert \alpha \vert \leq 2} a_{\alpha}^{0} ( x ) \xi^{\alpha} ,
\end{equation}
the semiclassical principal symbol of $Q$.

To define the resonances, we assume that the coefficients $a_{\alpha} ( x ; h )$ extend holomorphically in $x$ to the region
\begin{equation}
\Upsilon = \big\{ x \in \C^{n} ; \ \vert \im x \vert \leq \delta \vert \re x \vert \text{ and } \vert \re x \vert \geq R_{1} \big\} ,
\end{equation}
for some $\delta > 0$ and $R_{1} > R_{0}$, and that the relevant parts of \eqref{a3}--\eqref{a50} remain valid in $\Upsilon$. Under these assumptions, it is possible to define the resonances by complex distortion following the approach of Sj\"{o}strand \cite{Sj01_01} (see also Aguilar and Combes \cite{AgCo71_01}, Hunziker \cite{Hu86_01}, Hellfer and Martinez \cite{HeMa87_01} and Sj\"{o}strand and Zworski \cite{SjZw91_01} for more references concerning the definition of the resonances by complex scaling). Let $\Gamma_{\theta}$ be a maximally totally real manifold which coincides with $\R^{n}$ along $B ( R_{1} )$ and with $e^{i \theta} \R^{n}$ outside a compact set, and which satisfies some additional assumptions described in \cite[Section 3]{Sj01_01}. For $0 \leq \theta \leq \theta_{0}$ with $\theta_{0} > 0$ small enough, the operator
\begin{equation*}
P_{\theta} = P \vert_{\Gamma_{\theta}} ,
\end{equation*}
is well defined on $\CD$. Moreover, the spectrum of $P_{\theta}$ in
\begin{equation}
\Lambda_{\theta} = \{ z \in \C ; \ - 2 \theta < \arg z \leq 0 \} ,
\end{equation}
is discrete and independent of $\theta$ and of the choice of $\Gamma_{\theta}$ (in the sense that $P_{\theta}$ and $P_{\theta^{\prime}}$ have the same eigenvalues with the same multiplicity in $\Lambda_{\theta} \cap \Lambda_{\theta^{\prime}}$). By definition, the resonances of $P$ are the eigenvalues of $P_{\theta_{0}}$ in $\Lambda_{\theta_{0}}$.

As a matter of fact, the resolvent
\begin{equation*}
( P - z )^{-1} : \CH_{\rm comp} \longrightarrow \CD_{\rm loc} ,
\end{equation*}
admits a meromorphic continuation from the upper complex half-plane $\{ \im z > 0 \}$ to $\Lambda_{\theta_{0}}$ and the poles of this extension are the resonances. Moreover, if a cut-off function $\varphi \in C_0^{\infty}(\R^n)$ is supported in the set where $\Gamma_{\theta}$ coincides with $\R^{n}$, then
\begin{equation} \label{a34}
\varphi ( P - z )^{-1} \varphi = \varphi ( P_{\theta} - z )^{-1} \varphi .
\end{equation}
We refer to Helffer and Martinez \cite{HeMa87_01} for the equivalence of various definitions of the resonances.

For two functions $f , g$, we will use the notation $f \prec g$ if $g = 1$ in a neighborhood of the support of $f$. Since we work with operators of ``black box'' type, the different cut-off functions appearing in the sequel will be assumed to be constant near $B ( R_{0} )$. In the following, $\Vert \cdot \Vert$ will denote the norm of the Hilbert space $\CH$ and the operator norm on $\CH$. Finally, $( P - z )^{- 1}$ will designate the meromorphic extension of the resolvent from the upper half-plane to $\Lambda_{\theta_{0}}$ (and not the inverse of $P - z$). Our first theorem yields a link between the cut-off resolvents with two cut-off functions $\chi$ and an arbitrary cut-off $\varphi$.

\begin{theorem}\sl \label{a13}
Let $[ E_{0} , E_{1} ] \subset ] 0 , + \infty [$. There exists $a_{0} > R_{0}$ such that, for all $M > 0$ and $\chi , \varphi \in C^{\infty}_{0} ( \R^{n} )$ with $\one_{B ( a_{0} )} \prec \chi$, there exists $C > 0$ such that
\begin{equation*}
\big\Vert \varphi ( P - z )^{-1} \varphi \big\Vert \leq C e^{C \vert \im z \vert / h} \big\Vert \chi ( P - z )^{-1} \chi \big\Vert ,
\end{equation*}
for $z \in [ E_{0} , E_{1} ] - i [ 0 , M h \vert \ln h \vert ]$ not a resonance and $h$ small enough.
\end{theorem}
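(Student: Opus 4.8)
The plan is to reduce the estimate to a statement about the distorted operator $P_\theta$ and exploit the exponential decay of its resolvent away from the interaction region, combining this with an elliptic estimate in the rings $\CC_{a,b}$. First I would fix a cut-off $\chi$ with $\one_{B(a_0)} \prec \chi$ and choose intermediate cut-offs $\psi_1 \prec \psi_2 \prec \chi$, all constant near $B(R_0)$, with $\varphi \prec \psi_1$. Using \eqref{a34}, write $\varphi ( P - z )^{-1} \varphi = \varphi ( P_\theta - z )^{-1} \varphi$ with $\theta \sim h |\ln h|$ (i.e.\ $\theta$ chosen commensurate with $|\im z|/h$ up to constants, so that $e^{C\theta} \sim e^{C|\im z|/h}$). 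The key input is the standard black-box estimate that for $\theta_1 \leq \theta \leq \theta_0$ the free (or exterior) distorted resolvent satisfies $\| \op(1-\psi) ( P_\theta - z )^{-1} \op(1-\psi') \| \lesssim h^{-1}$, together with the fact that, away from the support of the distortion and the black box, $P_\theta$ agrees with the elliptic operator $Q$, whose resolvent on rings is controlled by $Ch^{-1}$ on $[E_0, E_1]$.

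Next I would set up a parametrix/commutator argument. Writing $u = \varphi (P_\theta - z)^{-1} \varphi f$ for $f \in \CH$, I would insert the identity $1 = \chi + (1-\chi)$ and estimate the two pieces separately. For the piece supported where $\chi = 1$, I would like to bound $\| \psi_1 (P_\theta - z)^{-1}\psi_1 \|$ in terms of $\|\chi (P_\theta-z)^{-1}\chi\| = \|\chi (P-z)^{-1}\chi\|$; since on the support of the relevant cut-offs $P_\theta = P$ this is handled by expanding $(P_\theta - z)^{-1} = \chi(P_\theta-z)^{-1}\chi + (\text{terms with commutators } [P_\theta,\chi])$ and using that $[P_\theta,\chi]$ is a first-order semiclassical operator supported in $\CC_{a_0,b_0}$, so it can be absorbed at the cost of an $h^{-1}$ factor and repeated use of the exterior resolvent bound. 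For the piece involving $(1-\chi)$, one is genuinely outside the black box and in the region where the complex distortion lives, so the decay estimate on $( P_\theta - z)^{-1}$ gives an $O(h^{-1})$ bound directly; the exponential factor $e^{C|\im z|/h}$ comes precisely from converting a bound valid on the scaled contour $\Gamma_\theta$ with $\theta \sim h|\ln h|$ back to the physical region, i.e.\ from the $e^{C\theta/h}$-type loss inherent in translating between $P_\theta$ with different $\theta$'s, or equivalently from iterating the commutator estimates $O(|\ln h|)$ times.

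A cleaner route, which I would actually pursue, is an iteration scheme: define $N(r) = \| \one_{B(r)} (P-z)^{-1} \one_{B(r)}\|$ and show a recursive inequality $N(r) \leq C h^{-1}\big( N(r') + \| \one_{\CC_{a,b}} (P-z)^{-1}\one_{\CC_{a,b}}\|\big)$ for suitable $r < r'$, or more precisely propagate the bound from large radii (where the exterior estimate $\lesssim h^{-1}$ holds) inward through $O(|\ln h|)$ concentric rings, each step losing a factor $C$; after $k \sim |\im z| / h$ steps this accumulates to $C^{k} = e^{C|\im z|/h}$. Each individual step is a three-line commutator estimate of the form: if $\psi \prec \psi'$ with $\supp(\nabla\psi) \subset \CC_{a,b}$, then $\psi(P-z)^{-1}\psi = \psi'(P-z)^{-1}\psi' + \psi'(P-z)^{-1}[P,\psi'](P-z)^{-1}\psi + \cdots$, and the commutator terms are estimated using $\| \one_{\CC_{a,b}}(P-z)^{-1}\one_{\CC_{a,b}}\|$ and the a priori exterior bound.

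The main obstacle, and the place where care is needed, is controlling the number of iterations and the size of the constant per step so that the product is exactly $e^{C|\im z|/h}$ and not something worse like $e^{C|\im z||\ln h|/h}$; this requires that each ring-to-ring propagation step lose only a bounded multiplicative constant independent of $h$ and $\im z$, which in turn relies on a uniform (in $z \in [E_0,E_1] - i[0,Mh|\ln h|]$) version of the exterior resolvent estimate and on quantitatively tracking how $[P,\psi]$ acts — essentially an Agmon-type exponential estimate for $(P_\theta - z)^{-1}$ on $\Gamma_\theta$ with $\theta$ of order $h|\ln h|$. I would expect the bulk of the technical work to go into establishing that exterior/distorted resolvent bound uniformly down to $\im z \sim -Mh|\ln h|$ and into the bookkeeping that turns the geometric accumulation of constants into the stated exponential factor.
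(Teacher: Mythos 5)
Your plan diverges from the paper's proof and, as written, has three genuine gaps. First, your ``key input'' --- that $\Vert (1-\psi)(P_\theta - z)^{-1}(1-\psi')\Vert \lesssim h^{-1}$ for the distorted operator itself, uniformly for $\im z \in [-Mh\vert \ln h\vert , 0]$ --- is not available: $P_\theta$ still contains the trapping black box, and its resolvent (even truncated away from $B(R_0)$) blows up near resonances, which may be superpolynomially close to the real axis in this band. The paper circumvents exactly this by introducing an auxiliary \emph{globally non-trapping} operator $R_a = \nu(\tfrac{x}{a})Q\nu(\tfrac{x}{a}) - \tau(\tfrac{x}{a})h^2\Delta\tau(\tfrac{x}{a})$, which coincides with $P$ outside $B(a)$, and for which the Nakamura--Stefanov--Zworski estimate gives $\Vert \varphi (R_a - z)^{-1}\varphi\Vert \lesssim h^{-1}e^{C\vert\im z\vert/h}$ (Lemma \ref{a9}); a single resolvent identity with two intermediate cut-offs \eqref{a16} then transfers this to $P$, with no iteration. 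Second, your mechanism for producing the factor $e^{C\vert\im z\vert/h}$ is not coherent: fixing $\theta \sim h\vert\ln h\vert$ (so $e^{C\theta/h}\sim h^{-C}$) or iterating over $O(\vert\ln h\vert)$ rings with a constant loss per step yields a loss $h^{-C}$ \emph{uniformly} in the band, which is strictly weaker than the theorem when $\vert\im z\vert = O(h)$ and false in spirit on the real axis, where the statement requires a loss $O(1)$; moreover the number of spatial propagation steps is fixed by the (fixed) supports of $\varphi,\chi$, so ``$k\sim \vert\im z\vert/h$ steps'' has no meaning. The correct source of the exponential factor is the $z$-dependent loss $e^{C\vert\im z\vert/h}$ already present in the non-trapping resolvent bound below the real axis, used once.

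Third, and independently of the above, any decomposition of this type leaves an additive term of size $h^{-1}e^{C\vert\im z\vert/h}$ carrying \emph{no} factor of $\Vert\chi(P-z)^{-1}\chi\Vert$, so at best you obtain $\Vert\varphi(P-z)^{-1}\varphi\Vert \lesssim e^{C\vert\im z\vert/h}\bigl(\Vert\chi(P-z)^{-1}\chi\Vert + h^{-1}\bigr)$, which is not the statement. The paper removes the additive term using the lower bound of Proposition \ref{a10}, $\Vert\chi(P-z)^{-1}\chi\Vert \gtrsim h^{-1}e^{-\vert\im z\vert/h}$ (proved by an explicit WKB quasimode localized where $\chi\neq 0$ outside $B(a_0)$, using that $\one_{B(a_0)}\prec\chi$); your proposal contains no substitute for this step. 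A smaller point: your requirement $\varphi\prec\psi_1\prec\psi_2\prec\chi$ cannot be arranged for general $\varphi$, since the theorem only assumes $\chi=1$ near $B(a_0)$ while $\supp\varphi$ may be much larger; the paper instead places the intermediate cut-offs between $\one_{B(a)}$ and $\chi$ and lets the arbitrary $\varphi$ be handled by the $R_a$-resolvent, which agrees with $P$ outside $B(a)$.
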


On the real axis, such a result was essentially obtained by Robert and Tamura \cite[Page 437]{RoTa87_01} (see also Bruneau and the second author \cite[Proposition 3]{BrPe00_01} for trapping situations) to prove the well-known resolvent estimate in non-trapping semiclassical situations. The next theorem is our main result. We obtain an estimate of $\varphi (P - z)^{-1} \varphi$ by the norm of the cut-off resolvent $\one_{\CC_{a,b}} (P - z)^{-1} \one_{\CC_{a,b}}$.

\begin{theorem}\sl \label{a1}
Let $[ E_{0} , E_{1} ] \subset ] 0 , + \infty [$. There exists $a_{0} > R_{0}$ such that, for all $a_{0} < a < b$, $M > 0$ and $\varphi \in C^{\infty}_{0} ( \R^{n} )$, there exists $C > 0$ such that
\begin{equation*}
\big\Vert \varphi ( P - z )^{-1} \varphi \big\Vert \leq C \frac{h}{\vert \im z \vert} e^{C \vert \im z \vert / h} \big\Vert \one_{\CC_{a,b}} (P - z)^{-1} \one_{\CC_{a,b}} \big\Vert ,
\end{equation*}
for $z \in [ E_{0} , E_{1} ] - i [ 0, M h \vert \ln h \vert ]$ not a resonance and $h$ small enough.
\end{theorem}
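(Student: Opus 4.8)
The plan is to first reduce to a fixed cut-off and then prove a single new ``exterior energy estimate'' which, used twice, produces the factor $h/|\im z|$. By Theorem \ref{a13}, there is a constant $a_{0}$ (larger than the one of Theorem \ref{a13}) and a cut-off $\chi \in C_{0}^{\infty}(\R^{n})$ with $\supp \chi \subset B(a_{0})$ and $\one_{B(a_{0}')} \prec \chi$ for the constant $a_{0}'$ of Theorem \ref{a13}, such that it suffices to bound $\Vert \chi (P-z)^{-1}\chi \Vert$ by $C\,\frac{h}{|\im z|} e^{C|\im z|/h}\,\Vert \one_{\CC_{a,b}}(P-z)^{-1}\one_{\CC_{a,b}}\Vert$; the two exponential factors then merge. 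Fix the ring $\CC_{a,b}$ with $a_{0} < a < b$, write $M := \Vert \one_{\CC_{a,b}}(P-z)^{-1}\one_{\CC_{a,b}}\Vert$, and note $\supp \chi \subset B(a_{0}) \subset B(a)$. Choosing the distortion $\Gamma_{\theta}$ to coincide with $\R^{n}$ on $B(R_{1})$ with $R_{1} > b$, all cut-offs below are supported where $\Gamma_{\theta} = \R^{n}$, so that $\varphi (P-z)^{-1}\varphi = \varphi (P_{\theta}-z)^{-1}\varphi$, $\chi(P-z)^{-1}\chi = \chi(P_{\theta}-z)^{-1}\chi$ and $\one_{\CC_{a,b}}(P-z)^{-1}\one_{\CC_{a,b}} = \one_{\CC_{a,b}}(P_{\theta}-z)^{-1}\one_{\CC_{a,b}}$; and for $z$ not a resonance $(P_{\theta}-z)^{-1}$ is bounded on $\CH$, so every pairing below is legitimate with no recourse to analytic continuation. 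We shall also use three standard facts: the monotonicity $\Vert \one_{\CC'}(P-z)^{-1}\one_{\CC'}\Vert \le M$ for any ring $\CC' \subset \CC_{a,b}$; the lower bound $M \gtrsim h^{-1}$ (the cut-off resolvent on a ring always dominates that of the free Laplacian on the ring); and the transposition identity $\Vert \one_{\CC'}(P-z)^{-1}\chi \Vert = \Vert \chi(P-z)^{-1}\one_{\CC'}\Vert$, which follows from the symmetry of the distorted operator $P_{\theta}$ and the invariance of the operator norm under transposition.

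The key new ingredient is the following. Let $g \in \CH$ be compactly supported in $B(R_{1})$, set $u = (P_{\theta}-z)^{-1}g$, and let $\psi \in C_{0}^{\infty}(\R^{n})$ be constant near $B(R_{0})$, equal to $1$ on a ball, with $\supp d\psi$ contained in a ring $\CC' \Subset \CC_{a,b}$ on which $g$ vanishes. Pairing $(P_{\theta}-z)u = g$ against $\psi^{2}u$, taking imaginary parts, using that $P_{\theta}$ acts like the self-adjoint operator $P$ on functions supported in $B(R_{1})$ (so $\< P_{\theta}(\psi u),\psi u\> \in \R$) and that $[P_{\theta},\psi] = [Q,\psi]$ is a first-order $\CO(h)$ differential operator supported in $\CC'$, one obtains
\[
|\im z|\,\Vert \psi u \Vert^{2} \le C\,|\< g, \psi^{2}u\>| + C\,\Vert [Q,\psi]u \Vert_{L^{2}(\CC')}\,\Vert \one_{\CC'}u\Vert .
\]
Since $(Q-z)u = 0$ near $\CC'$, the semiclassical interior elliptic estimate gives $\Vert [Q,\psi]u\Vert_{L^{2}(\CC')} \lesssim h\,\Vert u \Vert_{L^{2}(\widetilde{\CC'})}$ for a slightly enlarged ring $\widetilde{\CC'} \Subset \CC_{a,b}$, hence
\[
|\im z|\,\Vert \psi u \Vert^{2} \le C\,|\< g,\psi^{2}u\>| + C\,h\,\Vert \one_{\widetilde{\CC'}}u\Vert^{2} . \tag{$\star$}
\]

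Now apply $(\star)$ twice, with auxiliary rings kept inside $\CC_{a,b}$. First take $g = \chi f$ and $\psi = 1$ on $B(a_{0})$, with $\supp d\psi$ in a subring whose enlargement is a fixed $\CC' \Subset \CC_{a,b}$: then $\psi^{2} = 1$ on $\supp\chi$, so $|\< \chi f,\psi^{2}u\>| = |\< f,\chi u\>| \le \Vert f\Vert\,\Vert \chi u\Vert$, while $\Vert \chi u\Vert \le \Vert \psi u\Vert$ and $\Vert \one_{\CC'}u\Vert \le \Vert \one_{\CC'}(P-z)^{-1}\chi\Vert\,\Vert f\Vert$; solving the resulting quadratic and taking $\sup_{\Vert f\Vert = 1}$ gives
\[
\Vert \chi(P-z)^{-1}\chi\Vert \lesssim \frac{1}{|\im z|} + \sqrt{\tfrac{h}{|\im z|}}\;\Vert \one_{\CC'}(P-z)^{-1}\chi\Vert .
\]
Secondly, take $g$ supported in $\CC'$ (now a source located in the ring) and $\psi = 1$ on $B(a_{0}) \cup \CC'$, with $\supp d\psi$ in another subring of $\CC_{a,b}$ disjoint from $\CC'$: then $|\< g,\psi^{2}u\>| = |\< g,u\>| \le \Vert g\Vert\,\Vert \one_{\CC'}u\Vert \le M\Vert g\Vert^{2}$ and $\Vert \one_{\widetilde{\CC'}}u\Vert \le M\Vert g\Vert$ by the monotonicity, while $\Vert \chi u\Vert \le \Vert \psi u\Vert$; hence $|\im z|\,\Vert \chi u\Vert^{2} \lesssim M(1 + hM)\Vert g\Vert^{2}$, i.e. $\Vert \chi(P-z)^{-1}\one_{\CC'}\Vert \lesssim \sqrt{M(1+hM)/|\im z|}$. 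Combining the two displays via the transposition identity $\Vert \one_{\CC'}(P-z)^{-1}\chi\Vert = \Vert \chi(P-z)^{-1}\one_{\CC'}\Vert$, and using $M \gtrsim h^{-1}$ to absorb the lower-order terms ($1/|\im z| \lesssim hM/|\im z|$ and $\sqrt{h/|\im z|}\cdot\sqrt{M(1+hM)/|\im z|} \lesssim hM/|\im z|$), we get $\Vert \chi(P-z)^{-1}\chi\Vert \lesssim (h/|\im z|)\,M$, and Theorem \ref{a13} finishes the proof.

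I expect the real work to be in establishing $(\star)$ rigorously for the distorted operator: verifying that $\psi u \in \CD$, that $P_{\theta}$ indeed behaves self-adjointly against functions supported in $B(R_{1})$, and that the commutator term is controlled through the semiclassical elliptic estimate in the non-trapping region $\{|x| > R_{0}\}$ where $P = Q$ is a long range perturbation of $-h^{2}\Delta$. The most delicate point should be the bookkeeping of cut-offs: every auxiliary cut-off must be supported in subrings of the prescribed $\CC_{a,b}$ so that only $\Vert \one_{\CC_{a,b}}(P-z)^{-1}\one_{\CC_{a,b}}\Vert$ (and not a larger ring) enters, and one must pass between $\one_{\CC'}(P-z)^{-1}\chi$ and $\chi(P-z)^{-1}\one_{\CC'}$, which relies on the transposition symmetry of the meromorphically continued resolvent (or, failing a convenient symmetry hypothesis on the black box, on re-running the argument for the incoming continuation). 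The appearance of $h/|\im z|$ is transparent in this scheme: it is the product of the two factors $\sqrt{h/|\im z|}$ coming from the two applications of $(\star)$, each of which encodes the slow leakage rate $|\im z|/h$ of the trapped part of the solution through the ring.
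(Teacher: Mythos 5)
Your proposal is, in substance, the paper's own proof: the engine is the identity $\< \chi ( P - z ) u , \chi u \> - \< \chi u , \chi ( P - z ) u \> = \< [ \chi^{2} , P ] u , u \> - 2 i \im z \Vert \chi u \Vert^{2}$, used twice (once with data concentrated near the origin, once with data in the ring / for the adjoint), combined with Theorem \ref{a13} to pass to an arbitrary $\varphi$ and with the lower bound on the ring-truncated resolvent to absorb the leftover terms. Your estimate $(\star)$ is the paper's lemma giving \eqref{a6} and \eqref{a7}; your interior elliptic estimate for the commutator term is an equivalent substitute for the paper's insertion of $( P + i )^{-1} ( P + i ) \psi_{3}$ in \eqref{a5}, and the factor $h / \vert \im z \vert$ arises in both arguments as the product of the two $\sqrt{h / \vert \im z \vert}$ gains.

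Two of your stated justifications need repair, though neither changes the architecture. First, the ``transposition identity'' $\Vert \one_{\CC^{\prime}} ( P - z )^{-1} \chi \Vert = \Vert \chi ( P - z )^{-1} \one_{\CC^{\prime}} \Vert$ cannot be obtained from a symmetry of $P_{\theta}$ in the abstract black box setting: the space $\CH_{R_{0}}$ carries no distinguished conjugation, so no transpose is available. The correct route is the one you mention as a fallback, and the one the paper takes: pass to the adjoint, $\Vert \one_{\CC^{\prime}} ( P - z )^{-1} \chi \Vert = \Vert \chi \, {( P - z )^{-1}}^{*} \one_{\CC^{\prime}} \Vert$ with ${( P - z )^{-1}}^{*} = ( P - \bar{z} )^{-1}$ (the incoming continuation, i.e. $( P_{- \theta} - \bar{z} )^{-1}$), and rerun $(\star)$ for that operator, noting that $\vert \im \bar{z} \vert = \vert \im z \vert$ and that $\Vert \one_{\CC_{a,b}} {( P - z )^{-1}}^{*} \one_{\CC_{a,b}} \Vert = M$; this is exactly the content of \eqref{a7}. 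Second, the ``standard fact'' $M \gtrsim h^{-1}$ is not standard: there is no comparison principle saying the cut-off resolvent on a ring dominates that of the free Laplacian. What is available is Proposition \ref{a10}, which (applied to a smooth cut-off supported in $\CC_{a,b}$, legitimate since $a > a_{0}$) gives only $M \gtrsim h^{-1} e^{- \vert \im z \vert / h}$. This weaker bound still suffices for your absorption step, because the lost factor is swallowed by $e^{C \vert \im z \vert / h}$ in the final estimate — this is precisely how the paper passes from \eqref{a8} to the theorem — but it must be quoted as Proposition \ref{a10}, not as a free-Laplacian domination. With these two corrections your argument is complete and coincides with the paper's proof in all essentials.
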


In particular, both Theorem \ref{a13} and Theorem \ref{a1} hold for any $a_{0}$ large enough. The above theorem gives no information on the real axis due to the factor $\vert \im z \vert^{-1}$ in the right hand side. This is in agreement with already known results, which say that the behavior of the resolvent truncated near the trapped set can be very different from its behavior truncated in rings far away from the origin. Indeed, under some additional assumptions on the operator $P$, Burq \cite{Bu02_01} and Cardoso and Vodev \cite{CaVo02_01} have proved that
\begin{equation*}
\sup_{z \in [ E_{0} , E_{1} ]} \big\Vert \one_{\CC_{a,b}} (P - z)^{-1} \one_{\CC_{a,b}} \big\Vert \lesssim h^{-1} ,
\end{equation*}
without hypothesis on the trapped set. On the other hand,
\begin{equation*}
\sup_{z \in [ E_{0} , E_{1} ]} \big\Vert \varphi ( P - z )^{-1} \varphi \big\Vert ,
\end{equation*}
can be of order $h^{-1}$ in the non-trapping case (see Robert and Tamura \cite{RoTa87_01}) or greater than $e^{\varepsilon / h}$, with $\varepsilon > 0$, as in the well in an island situation (see e.g. Helffer and Sj\"{o}strand \cite{HeSj86_01} or Nakamura, Stefanov and Zworski \cite{NaStZw03_01}). For $\im z = - A h$, our result implies the following

\begin{corollary}\sl \label{a12}
Under the assumptions and notations of Theorem \ref{a1} and for $A > 0$, we have
\begin{equation*}
\big\Vert \varphi ( P - z )^{-1} \varphi \big\Vert \lesssim \big\Vert \one_{\CC_{a,b}} (P - z)^{-1} \one_{\CC_{a,b}} \big\Vert ,
\end{equation*}
for $z \in [ E_{0} , E_{1} ] - i A h$ not a resonance.

In particular, if in addition $\varphi$ does not vanish near $B ( a_{0} )$, the norms of the operators $\varphi ( P - z )^{-1} \varphi$ and $\one_{\CC_{a,b}} (P - z)^{-1} \one_{\CC_{a,b}}$ are equivalent for $z \in [ E_{0} , E_{1} ] - i A h$ not a resonance.
\end{corollary}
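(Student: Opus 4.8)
The plan is to read off both inequalities directly from Theorem \ref{a1}, using that on the line $\im z = -Ah$ the two $z$-dependent factors there are of size $O(1)$. Fix $A>0$. For $h$ small enough one has $Ah \leq Mh\vert \ln h \vert$, so every $z = E - iAh$ with $E \in [E_0,E_1]$ lies in the region $[E_0,E_1] - i[0,Mh\vert \ln h \vert]$ to which Theorem \ref{a1} applies; moreover $\vert \im z \vert = Ah$, hence $\frac{h}{\vert \im z \vert} = \frac{1}{A}$ and $e^{C\vert \im z \vert/h} = e^{CA}$. Thus the factor $C\frac{h}{\vert \im z \vert}e^{C\vert \im z \vert/h}$ collapses to a constant depending only on $A$ and on the data already fixed, and Theorem \ref{a1} yields
\[
\big\Vert \varphi (P-z)^{-1}\varphi \big\Vert \lesssim \big\Vert \one_{\CC_{a,b}}(P-z)^{-1}\one_{\CC_{a,b}} \big\Vert ,
\]
which is the first assertion of the corollary.

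For the converse bound under the extra hypothesis on $\varphi$, I would first record a consequence of the first part: the norm $\Vert \one_{\CC_{a',b'}}(P-z)^{-1}\one_{\CC_{a',b'}}\Vert$ is, up to multiplicative constants independent of $z \in [E_0,E_1] - iAh$, the same for all rings with $a_0 < a' < b'$. Indeed, given two such rings $\CC_{a',b'}$ and $\CC_{a'',b''}$, choose $\psi \in C_0^\infty(\R^n)$ equal to $1$ on $\CC_{a',b'}$; then $\one_{\CC_{a',b'}}(P-z)^{-1}\one_{\CC_{a',b'}} = \one_{\CC_{a',b'}}\,\psi (P-z)^{-1}\psi\,\one_{\CC_{a',b'}}$, so its norm is at most $\Vert \psi(P-z)^{-1}\psi\Vert$, which by the first assertion (applied with $\varphi = \psi$ and $\CC_{a'',b''}$ in place of $\CC_{a,b}$) is $\lesssim \Vert \one_{\CC_{a'',b''}}(P-z)^{-1}\one_{\CC_{a'',b''}}\Vert$. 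Hence it suffices to establish the converse bound for one conveniently placed ring.

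Now assume $\varphi$ does not vanish near $B(a_0)$: there are $\varepsilon > 0$ and $c > 0$ with $\vert \varphi(x) \vert \geq c$ whenever $\vert x \vert \leq a_0 + \varepsilon$. Pick $a_0 < a_1 < b_1 < a_0 + \varepsilon$. Then multiplication by $\one_{\CC_{a_1,b_1}}/\varphi$ is bounded with norm $\leq 1/c$, and $\one_{\CC_{a_1,b_1}} = (\one_{\CC_{a_1,b_1}}/\varphi)\,\varphi$ as multiplication operators, so
\[
\one_{\CC_{a_1,b_1}}(P-z)^{-1}\one_{\CC_{a_1,b_1}} = \frac{\one_{\CC_{a_1,b_1}}}{\varphi}\; \varphi (P-z)^{-1}\varphi \; \frac{\one_{\CC_{a_1,b_1}}}{\varphi} ,
\]
whence $\Vert \one_{\CC_{a_1,b_1}}(P-z)^{-1}\one_{\CC_{a_1,b_1}}\Vert \leq c^{-2}\Vert \varphi(P-z)^{-1}\varphi\Vert$. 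Combined with the ring-independence of the previous paragraph, this gives $\Vert \one_{\CC_{a,b}}(P-z)^{-1}\one_{\CC_{a,b}}\Vert \lesssim \Vert \varphi(P-z)^{-1}\varphi\Vert$, and together with the first assertion the two norms are equivalent for $z \in [E_0,E_1] - iAh$ not a resonance.

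There is no serious obstacle here; the argument is bookkeeping on top of Theorem \ref{a1}. The one point needing care is that $\varphi$ cannot be inverted globally, only on a ring where it is bounded away from $0$, so the ring-independence statement — itself just Theorem \ref{a1} applied twice — is exactly what transports the converse bound from that auxiliary ring to the ring $\CC_{a,b}$ named in the corollary.
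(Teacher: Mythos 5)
Your proposal is correct and is essentially the argument the paper intends (the corollary is stated without a separate proof): on the line $\im z = -Ah$ the factor $C\frac{h}{\vert \im z \vert}e^{C\vert \im z \vert/h}$ in Theorem \ref{a1} is $O(1)$, giving the first bound, and the reverse bound follows exactly as you do it, by dividing by $\varphi$ on a thin ring just outside $B(a_{0})$ where $\vert \varphi \vert \geq c$ and transporting to the ring $\CC_{a,b}$ by another application of the same estimate with an auxiliary cut-off. The only point worth noting is that the ring-comparison step could equally be run through Theorem \ref{a13}; this changes nothing of substance.
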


The term $e^{C \vert \im z \vert / h}$ appearing in Theorem \ref{a13} and Theorem \ref{a1} cannot be removed in general. To show this, it is enough to consider the distribution kernel of $( - h^{2} \Delta - z )^{-1}$ in dimension $n = 1$ which is given by
\begin{equation*}
\frac{i e^{i \sqrt{z} \vert x - y \vert / h}}{2 h \sqrt{z}} .
\end{equation*}
Note also that the constant $C > 0$ in the term $e^{C \vert \im z \vert / h}$ depends necessarily on $a , b , \varphi$.

\begin{remark}\sl
If $P$ has no resonance in $[ E_{0} - \varepsilon , E_{1} + \varepsilon ] - i [ 0 , A h ]$, $\varepsilon > 0$, and if the norm of $\one_{\CC_{a,b}} (P - z)^{-1} \one_{\CC_{a,b}}$ is controlled in $[ E_{0} - \varepsilon , E_{1} + \varepsilon ] - i A h$, one can exploit Corollary \ref{a12} combined with a priori bounds on the cut-off resolvent (see e.g. Burq and Zworski \cite{BuZw01_01}) and the semiclassical maximum principle (see Tang and Zworski \cite{TaZw98_01}) to establish a bound of the cut-off resolvent $\varphi ( P - z )^{-1} \varphi$ without $\vert \im z \vert^{-1}$ in the band $[ E_{0} , E_{1} ] - i [ 0 , A h ]$.
\end{remark}

In the proof of the previous results, we will use the following lower bound which can have an independent interest.

\begin{proposition}\sl \label{a10}
Let $[ E_{0} , E_{1} ] \subset ] 0 , + \infty [$. There exists $a_{0} > R_{0}$ such that, for all $\varphi \in C^{\infty}_{0} ( \R^{n} )$ satisfying $\supp \varphi \cap B ( a_{0} )^{c} \neq \emptyset$, there exists $C > 0$ such that
\begin{equation*}
\big\Vert \varphi ( P- z )^{- 1} \varphi \big\Vert \geq C h^{-1} e^{- \vert \im z \vert / h} ,
\end{equation*}
for $z \in ( [ E_{0} , E_{1} ] - i [ 0 , 1 ] ) \cap \Lambda_{\theta_{0} / 2}$ not a resonance and $h$ small enough.
\end{proposition}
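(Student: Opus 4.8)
The plan is to produce an explicit quasimode-type test: if $\supp\varphi$ meets the region $B(a_0)^c$ where the coefficients of $P$ agree (up to $O(h)$) with those of $-h^2\Delta$, one can plant an outgoing free wave in a ring $\CC_{a,b}$, localised away from the black box, and compare its behaviour with the resolvent applied to a source supported in $\supp\varphi$. Concretely, fix $z=E-i\eta$ with $E\in[E_0,E_1]$ and $0\le \eta\le |\im z|$, pick a point $x_0$ with $\varphi(x_0)\ne 0$ and $|x_0|>a_0$, and take a WKB function $v=\psi(x)e^{i\sqrt z\,\langle x,\omega\rangle/h}$ (or a spherical outgoing wave $|x|^{-(n-1)/2}e^{i\sqrt z|x|/h}$), where $\psi$ is a cutoff supported in a small ball around $x_0$ on which $\varphi\equiv\text{const}\ne0$. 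Because the principal symbol of $P$ there is $\xi^2$, one has $(P-z)v=[\text{symbol error}]\,v+h\,(\text{lower order})=O(h)$ in $L^2$, with the error term itself supported in $\supp\psi\subset\{\varphi\ne0\}$ — call it $w$, so $(P-z)v=w$ and $\|w\|\le Ch\|v\|$ while $\|v\|\sim 1$.

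The key step is then the identity, valid since $z$ is not a resonance and $v\in\CD_{\rm loc}$ is compactly supported, obtained by applying the meromorphic resolvent: $\varphi(P-z)^{-1}\varphi w = \varphi(P-z)^{-1}(P-z)v=\varphi v$ — but to make this rigorous one should insert cutoffs carefully, writing $v=\varphi_1 v$ with $\varphi_1\prec\varphi$ (taking $\psi\prec\varphi_1$), so that $\varphi(P-z)^{-1}\varphi (P-z)v$, after an integration by parts / commutator manipulation, equals $\varphi v$ up to a commutator term $[P,\varphi_1]v$ that is again $O(h)$ and supported where $\varphi=\text{const}$. Pairing against a suitable test function $g$ (e.g. $g=\overline{\varphi v}/\|\varphi v\|$, or more simply taking norms directly), we get
\begin{equation*}
\|\varphi v\| = \big\|\varphi(P-z)^{-1}\big(\varphi w + [\,\cdot\,]\big)\big\| \le C h\,\big\|\varphi(P-z)^{-1}\varphi\big\|\cdot\|v\| + (\text{similar terms}),
\end{equation*}
and since $\|\varphi v\|\gtrsim 1$ this already yields $\|\varphi(P-z)^{-1}\varphi\|\gtrsim h^{-1}$. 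To recover the exponential weight $e^{-|\im z|/h}$ one must be honest about the growth of the outgoing WKB wave: over the $O(1)$-diameter support of the cutoffs, $|e^{i\sqrt z\langle x,\omega\rangle/h}|$ can decay like $e^{-c\eta/h}$ in the worst direction, so $\|w\|\le Che^{-C\eta/h}$ at best after the cutoff weights are accounted for; propagating this through the inequality gives the stated $C h^{-1}e^{-|\im z|/h}$. Equivalently, and cleanly, one runs the argument on the free resolvent kernel $\tfrac{i e^{i\sqrt z|x-y|/h}}{2h\sqrt z}$ (in $n=1$; its analogue in higher $n$) localised to $\varphi$, for which the lower bound $|{\rm kernel}|\ge c h^{-1}e^{-|\im z||x-y|/h}$ is transparent, then transfers it to $P$ via Theorem \ref{a13} (or a direct parametrix comparison in the region $|x|>a_0$, where $P$ and $-h^2\Delta$ differ by $O(h)$ plus an exponentially small long-range tail).

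The main obstacle is the bookkeeping that keeps every error term supported strictly inside the set where $\varphi$ is locally constant, so that applying $\varphi(P-z)^{-1}\varphi$ really does reproduce $\varphi v$ and does not leak mass into the black box $B(R_0)$ or into regions where the long-range tail of $P-(-h^2\Delta)$ is not yet negligible; this is exactly what forces the hypothesis $\supp\varphi\cap B(a_0)^c\ne\emptyset$ and what pins down how large $a_0$ must be chosen (large enough that on $|x|>a_0$ the perturbation $q(x,\xi)-\xi^2$ and the holomorphic tail are small enough to absorb into the $O(h)$ errors, uniformly for $z\in([E_0,E_1]-i[0,1])\cap\Lambda_{\theta_0/2}$). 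A secondary, easier point is checking that the WKB wave, suitably cut off, genuinely lies in $\CD_{\rm loc}$ and that the distortion identity \eqref{a34} applies; this is routine given the black-box assumptions, since the cutoffs are constant near $B(R_0)$.
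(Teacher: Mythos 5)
Your overall strategy is the same as the paper's (a compactly supported quasimode planted near a point $x_0\in\supp\varphi$ with $\vert x_0\vert> a_0$, the error rewritten as $\varphi v$, and inversion through the distorted operator plus \eqref{a34}), but the quasimode you propose fails under the paper's hypotheses. You take the free phase $\sqrt z\,\langle x,\omega\rangle$ and argue that near $x_0$ the principal symbol of $P$ "is $\xi^2$", with the discrepancy absorbable into the $O(h)$ error. The long-range assumption \eqref{d1} only gives $q(x,\xi)-\xi^2=o(1)$ as $\vert x\vert\to\infty$, with no rate and no power of $h$: at the fixed point $x_0$ the discrepancy is a fixed, $h$-independent quantity $\varepsilon(a_0)>0$, no matter how large $a_0$ is. So your eikonal error makes $(P-z)v=O(\varepsilon(a_0)+h+\cdots)$, and the argument only yields $\Vert\varphi(P-z)^{-1}\varphi\Vert\gtrsim \varepsilon(a_0)^{-1}=O(1)$, not $h^{-1}$. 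The same objection rules out your "direct parametrix comparison in $\vert x\vert>a_0$, where $P$ and $-h^2\Delta$ differ by $O(h)$ plus an exponentially small tail": no such quantitative bound is in the hypotheses. The missing idea, which is exactly what the paper does, is to solve the eikonal equation for the true symbol: since $q(x,0)\le E_0/2$ for $\vert x\vert\ge a_0$, one finds $\xi_0(\lambda)$ with $q(x_0,\xi_0(\lambda))=\lambda$ and $\partial_{\xi_1}q\neq 0$, hence a real phase $\psi(x,\lambda)$ with $q(x,\nabla_x\psi)=\lambda$ near $x_0$; with $u=\chi e^{i\psi(x,\re z)/h}$ the error is $O(h+\vert\im z\vert)$ with no symbol error at all.

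Two further problems. First, your fallback of proving the bound for the free resolvent and "transferring it to $P$ via Theorem \ref{a13}" is circular: the paper's proof of Theorem \ref{a13} invokes Proposition \ref{a10} (and Theorem \ref{a13} compares two cut-offs of the same resolvent, not $P$ with $-h^2\Delta$). Second, your treatment of $\im z$ is both delicate and unnecessary: with the complex phase, $\vert e^{i\sqrt z\langle x,\omega\rangle/h}\vert$ grows in the direction $\omega$, the ratio of the error norm to $\Vert\varphi v\Vert$ picks up a factor $e^{C\vert\im z\vert/h}$ with $C$ depending on the support and the energy (your sign bookkeeping, "decay... so $\Vert w\Vert\le C h e^{-C\eta/h}$", is off), so at best you would obtain $h^{-1}e^{-C\vert\im z\vert/h}$ with an unspecified $C$, not the stated exponent. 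The paper sidesteps this entirely: with the real phase at energy $\re z$, the imaginary part enters only additively, giving $\Vert\varphi(P-z)^{-1}\varphi\Vert\gtrsim (h+\vert\im z\vert)^{-1}$, and the stated estimate follows from the elementary inequality $(h+\vert\im z\vert)^{-1}\ge h^{-1}e^{-\vert\im z\vert/h}$. A minor point in the same direction: you cannot assume $\varphi$ is constant on a ball around $x_0$ (it is an arbitrary $C_0^\infty$ function); the paper instead works on $W=\{\vert\varphi\vert\ge\vert\varphi(x_0)\vert/2\}$ and divides the error by $\varphi$ there, which also makes your auxiliary cut-off $\varphi_1$ and the commutator $[P,\varphi_1]$ manipulations unnecessary, since the cut-off $\chi$ is already built into the quasimode.
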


The second question we deal with in this paper is that of estimating resonant states. Let $z$ be a resonance of $P$. Then, from the general theory of resonances, we can write, for $\lambda$ in a neighborhood of $z$,
\begin{equation} \label{a51}
( P - \lambda )^{-1} = \frac{\Pi_{N}}{(z - \lambda )^{N}} + \cdots + \frac{\Pi_{1}}{z - \lambda} + \CA ( \lambda ) ,
\end{equation}
as operators from $\CH_{\rm comp}$ to $\CD_{\rm loc}$, where $\CA ( \lambda )$ is an operator-valued function holomorphic near $z$ and the $\Pi_{j}$'s are finite rank operators satisfying $\im \Pi_{j} \subset \im \Pi_{1}$ and $\Pi_{1} \neq 0$.

\begin{definition}\sl
A resonant state $u$ is an element of $\im \Pi_{1}$ which satisfies $( P - z ) u = 0$.
\end{definition}

In particular, resonant states are in $\CD_{\rm loc}$ but, in general, they are not in $\CH$. In the same spirit as in Theorem \ref{a1}, we obtain the following

\begin{theorem}\sl \label{a17}
Let $[ E_{0} , E_{1} ] \subset ] 0 , + \infty [$. There exists $a_{0} > R_{0}$ such that, for all $a_{0} < a < b$, $M > 0$ and $\varphi \in C^{\infty}_{0} ( \R^{n} )$, there exists $C > 0$ such that
\begin{equation} \label{a23}
\Vert \varphi u \Vert \leq C \sqrt{\frac{h}{\vert \im z \vert}} e^{C \vert \im z \vert / h} \big\Vert \one_{\CC_{a,b}} u \big\Vert ,
\end{equation}
for any resonant state $u$ associated to a resonance $z \in [ E_{0} , E_{1} ] - i [ 0 , M h \vert \ln h \vert ]$ and $h$ small enough.
\end{theorem}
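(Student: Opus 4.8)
The plan is to combine a Green's‑formula (flux) estimate for the interior of the ring with a control of the tail of $u$ coming from its outgoing character. Write $z = E - i \nu$ with $E \in [ E_{0} , E_{1} ]$ and $0 \le \nu \le M h \vert \ln h \vert$; if $\nu = 0$ the right‑hand side of \eqref{a23} is infinite and there is nothing to prove, so assume $\nu > 0$. Recall that $u \in \CD_{\rm loc}$, that $( P - z ) u = 0$, that $( Q - z ) u = 0$ on $\{ \vert x \vert > R_{0} \}$, and that $u$ is outgoing; $a_{0}$ will be taken large enough that in particular the trapped set of $Q$ at energies in $[ E_{0} , E_{1} ]$ projects into $B ( a_{0} )$. \emph{Step 1 (flux estimate for $\psi$ enclosed by the ring).} I first claim that for every $a_{0} < a < b$ and every $\psi \in C^{\infty}_{0} ( \R^{n} )$ with $\supp \psi \subset B ( b )$, $\Vert \psi u \Vert \lesssim \sqrt{h / \nu}\, \Vert \one_{\CC_{a,b}} u \Vert$. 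Indeed $\Vert \psi u \Vert \lesssim \Vert u \Vert_{L^{2} ( B ( b_{*} ) )}$ for some $b_{*} < b$; Green's formula for the self‑adjoint operator $P$ on $B ( R )$ with $\max ( a , b_{*} ) < R < b$, together with $( P - z ) u = 0$, gives $- 2 i \nu \Vert u \Vert^{2}_{L^{2} ( B ( R ) )} = $ (boundary term on $\partial B ( R )$), and since $R > R_{0}$ this term is the flux attached to the differential operator $Q$, hence $\lesssim h \Vert u \Vert_{L^{2} ( \partial B ( R ) )} \Vert h \nabla u \Vert_{L^{2} ( \partial B ( R ) )} + h \Vert u \Vert^{2}_{L^{2} ( \partial B ( R ) )}$. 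Integrating in $R$ over a subinterval of $( \max ( a , b_{*} ) , b )$, using $\Vert u \Vert_{L^{2} ( B ( R ) )} \ge \Vert u \Vert_{L^{2} ( B ( b_{*} ) )}$, Cauchy--Schwarz in $R$, the coarea formula and semiclassical interior elliptic estimates for $( Q - z ) u = 0$ on a subring relatively compact in $\CC_{a,b}$, one gets $\nu \Vert u \Vert^{2}_{L^{2} ( B ( b_{*} ) )} \lesssim h \Vert \one_{\CC_{a,b}} u \Vert^{2}$, which is the claim. This is the step that produces the factor $\sqrt{h / \vert \im z \vert}$.

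\emph{Step 2 (enlarging, then shrinking, the ring).} Given $\varphi$, choose $b'' > \max ( b , \sup_{\supp \varphi} \vert x \vert )$, so that by Step 1 with the ring $\CC_{a , b''}$ one has $\Vert \varphi u \Vert \lesssim \sqrt{h / \nu}\, \Vert \one_{\CC_{a , b''}} u \Vert$. It then suffices to bound $\Vert \one_{\CC_{b , b''}} u \Vert$ by $e^{C \nu / h} \Vert \one_{\CC_{a,b}} u \Vert$. Fix $\rho \in C^{\infty}_{0} ( \R^{n} )$ with $\one_{B ( \beta )} \prec \rho \prec \one_{B ( \beta' )}$ and $a < \beta < \beta' < b$; then $g := ( Q - z ) ( ( 1 - \rho ) u ) = - [ Q , \rho ] u \in \CH_{\rm comp}$ is supported in $\CC_{\beta , \beta'} \subset\subset \CC_{a,b}$ with $\Vert g \Vert \lesssim h \Vert \one_{\CC_{a,b}} u \Vert$ (elliptic estimates). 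Once $a_{0}$, hence $\beta$, is large enough one can choose a self‑adjoint operator $Q_{0}$ of black‑box/long‑range type, non‑trapping at energies near $[ E_{0} , E_{1} ]$ and equal to $Q$ off $B ( \beta )$ (for instance $Q_{0} = Q + W$ with $W \ge 0$ a radial barrier supported in $B ( \beta )$); being non‑trapping, $Q_{0}$ has no resonances in $\{ \im z \ge - M h \vert \ln h \vert \}$ for $h$ small, and by uniqueness of outgoing solutions $( 1 - \rho ) u = ( Q_{0} - z )^{- 1} g$. Since $1 - \rho \equiv 1$ on $\CC_{b , b''}$ and the cut‑off resolvent of a non‑trapping operator satisfies $\Vert \one_{\CC_{b , b''}} ( Q_{0} - z )^{- 1} \one_{\CC_{\beta , \beta'}} \Vert \lesssim h^{- 1} e^{C \nu / h}$ throughout that band (Robert--Tamura near the real axis, extended by the semiclassical maximum principle of Tang--Zworski or by Martinez's resonance‑free‑region estimates), we obtain $\Vert \one_{\CC_{b , b''}} u \Vert \lesssim h^{- 1} e^{C \nu / h} \Vert g \Vert \lesssim e^{C \nu / h} \Vert \one_{\CC_{a,b}} u \Vert$. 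Combining the two displays of Step 2 gives \eqref{a23}.

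\emph{Main difficulty.} The flux identity and the elliptic inputs are routine, but two points require care. The first is the validity of Green's formula for the black‑box operator $P$ on balls $B ( R )$ with $R > R_{0}$: this holds because the non‑differential part of $P$ is confined to $B ( R_{0} )$, where $P$ is symmetric, so only the flux of $Q$ through $\partial B ( R )$ survives. The second, and the genuinely delicate point, is the construction of the non‑trapping reference operator $Q_{0}$ coinciding with $Q$ near infinity — which is precisely what forces $a_{0}$ to be large — together with its cut‑off resolvent bound $h^{- 1} e^{C \nu / h}$ valid on the \emph{whole} band $\im z \ge - M h \vert \ln h \vert$, not merely for $\im z \ge - C h$; this is where the exponential factor enters, the same factor already present in Theorem \ref{a13} through the free Green's function $e^{i \sqrt{z} \vert x - y \vert / h}$. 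Finally, it is essential to run the flux argument of Step 1 on a ring large enough to enclose $\supp \varphi$ before shrinking it back in Step 2: only then does $\sqrt{h / \vert \im z \vert}$ \emph{multiply} $e^{C \vert \im z \vert / h}$ rather than merely add to it, which would violate the stated bound when $\vert \im z \vert$ is close to $M h \vert \ln h \vert$.
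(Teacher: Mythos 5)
Your argument is essentially the paper's: the Green's/commutator identity gives $\vert \im z \vert \, \Vert u \Vert^{2}_{B(b)} \lesssim h \Vert \one_{\CC_{a,b}} u \Vert^{2}$ (the paper does this with smooth cut-offs and the identity \eqref{a19} rather than sharp boundaries averaged in $R$), and the part of $u$ beyond $B(b)$ is recovered by applying the $h^{-1} e^{C \vert \im z \vert / h}$ resolvent bound of a non-trapping modification of $Q$ (your $Q_{0}$, the paper's planed operator $R_{a}$ of Lemma \ref{a9}) to commutator data $[Q,\rho]u$ supported in the ring, using that resonant states are outgoing --- a point which, for black-box resonant states defined through $\Pi_{1}$, is exactly what the paper's Appendix (Lemma \ref{a39}, via complex distortion) is there to justify, so your appeal to ``uniqueness of outgoing solutions'' should be backed by that kind of argument. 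Two minor caveats: a nonnegative radial barrier supported in $B(\beta)$ is not automatically non-trapping (one needs monotonicity and size conditions, or simply the paper's planing construction with the virial estimate \eqref{a21}), and your closing claim that an additive $e^{C \vert \im z \vert / h}$ term would violate the stated bound is incorrect, since $e^{C \vert \im z \vert / h} \leq \sqrt{h / \vert \im z \vert} \, e^{(C+1) \vert \im z \vert / h}$ --- the paper itself absorbs exactly such a term at the end of its proof.
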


Thus, this theorem gives a lower bound of the resonant states on the ring $\CC_{a,b}$. In a certain sense, it can be seen as an effective unique continuation result for the resonant states. However, we not consider the behavior at infinity of the resonant states.

\begin{remark}\sl \label{a18}
$i)$ Note that, under some assumptions and for resonances satisfying $\vert \im z \vert \lesssim h$, Stefanov \cite{St02_01} and Michel and the first author \cite{BoMi04_01} have shown that 
\begin{equation*}
\big\Vert \one_{\CC_{a,b}} u \big\Vert \lesssim \sqrt{\frac{\vert \im z \vert}{h}} \big\Vert \one_{B (b)} u \big\Vert .
\end{equation*}
Thus, the estimate given in Theorem \ref{a17} is sharp in this case.

$ii)$ Note also that one can use the known results concerning the resonant states to refine Theorem \ref{a17}. For instance, it is known that the resonant states are outgoing. This means that they vanish microlocally in the incoming region
\begin{equation*}
\Gamma_{-} ( \re z ) = \big\{ (x, \xi ) \in q^{-1} ( \re z ) ; \ \exp ( t H_q )( x , \xi ) \to \infty \text{ as } t \to -\infty \big\} . 
\end{equation*}
We refer to Michel and the first author \cite{BoMi04_01} for a precise result. Thus, it can be possible, under some assumptions, to replace $u$ by $\Psi u$ in the right hand side of \eqref{a23} where $\Psi$ is a pseudodifferential operator which microlocalizes near the complement of the incoming region.

$iii)$ For Schr\"{o}dinger operators $P = - h^{2} \Delta + V ( x )$ and for simple resonances, Theorem \ref{a17} can be deduced from Theorem \ref{a1}. Indeed, letting the spectral parameter go to $z$ in Theorem \ref{a1}, we get
\begin{equation*}
\big\Vert \varphi \Pi_{1} \varphi \big\Vert \leq C \frac{h}{\vert \im z \vert} e^{C \vert \im z \vert / h} \big\Vert \one_{\CC_{a,b}} \Pi_{1} \one_{\CC_{a,b}} \big\Vert .
\end{equation*}
Therefore \eqref{a23} follows since, for Schr\"{o}dinger operators, we can write $\Pi_{1} = c u \< \bar{u} , \cdot \>$ for some $c \in \C \setminus \{ 0 \}$.

$iv)$ Theorem \ref{a17} shows that the resonant states associated to resonances at distance $h$ from the real axis cannot be localized near the trapped set to first order. More precisely, let $u (h)$ be a family of resonant states, with $\Vert u (h) \Vert_{B ( b )} = 1$, whose corresponding resonances $z (h)$ verify $h / A \leq - \im z ( h ) \leq A h$. Then, every semiclassical measure $\mu$ associated to $u (h)$ has the property
\begin{equation} \label{a41}
\mu ( \CC_{a,b} \times \R^{n} ) > 0 .
\end{equation}
Note that, for differential operators (i.e. $P = Q$), one could obtain \eqref{a41} by using the propagation properties of the semiclassical measures associated to the resonant states (see e.g. Theorem 4 of Nonnenmacher and Zworski \cite{NoZw09_01}).
\end{remark}

\begin{example}\rm
The estimates given in Theorem \ref{a17} and Remark \ref{a18} $i)$ are already known in the well in an island situation. In dimension $n = 1$ and at the bottom of the well, Helffer and Sj\"{o}strand \cite[Proposition 11.1]{HeSj86_01} (see also Harrell and Simon \cite{HaSi80_01}) have proved that the imaginary part of the first resonance satisfies
\begin{equation*}
\im z = - ( \alpha + o ( 1 ) ) h^{1 / 2} e^{- 2 S_{0} / h},
\end{equation*}
where $S_{0} > 0$ is the Agmon distance between the well and the sea and $\alpha \neq 0$ is explicit. On the other hand, the resonant state $u$ (normalized on $B ( b )$) verifies
\begin{equation*}
\big\Vert \one_{B ( b )} u \big\Vert = 1 \qquad \text{ and } \qquad \big\Vert \one_{\CC_{a,b}} u \big\Vert = ( \beta + o (1) ) h^{- 1 / 4} e^{- S_{0} / h} ,
\end{equation*}
with $\beta \neq 0$. This is in agreement with Theorem \ref{a17} and Remark \ref{a18} $i)$.

Note that the well in an island situation in the multidimensional case has been treated in \cite[Theorem 10.12]{HeSj86_01}. We also refer to Fujii\'e, Lahmar-Benbernou and Martinez \cite{FuLaMa11_01} for potentials which are only $C^{\infty}$ in a compact set. In all these works, the authors prove precise asymptotics of the resonant states and they obtain the imaginary part of the resonances by a formula similar to \eqref{a19} which is used in the proof of Theorem \ref{a17}.
\end{example}

\begin{example}\rm
The resonant states have also been computed for barrier-top resonances. In \cite[Theorem 4.1]{BoFuRaZe09_01}, Fujii\'e, Ramond, Zerzeri and the first author have proved that, for simple resonances with $\vert \im z \vert \lesssim h$, the resonant states $u$ are classical Lagrangian distributions whose Lagrangian manifold $\Lambda_{+}$ is the stable outgoing Lagrangian manifold at the critical point. Moreover, the principal symbol of $u$ does not vanish almost everywhere on $\Lambda_{+}$.

In particular, since the spatial projection of $\Lambda_{+}$ is the whole space $\R^{n}$, we get
\begin{equation*}
\Vert \varphi u \Vert \lesssim \big\Vert \one_{\CC_{a,b}} u \big\Vert \lesssim \Vert \varphi u \Vert ,
\end{equation*}
for all $0 \neq \varphi \in C^{\infty}_{0} ( \R^{n} )$. On the other hand, in this context, the imaginary part of a resonance satisfies $\im z = - \lambda h + o ( h )$ where $\lambda \neq 0$ is given by the eigenvalues of the Hessian of the potential at its maximum. This is in agreement with Theorem \ref{a17} and Remark \ref{a18} $i)$.
\end{example}

By our arguments, we can also study the generalized resonant states.

\begin{definition}\sl
A generalized resonant state $u$ is an element of $\im \Pi_{1}$. The order of $u$ is the smallest integer $J \geq 1$ such that $( P - z )^{J} u = 0$.
\end{definition}

Note that, using the notations of \eqref{a51}, the order of a generalized resonant state is bounded by $N$ because $( P - z ) \Pi_{N} = 0$ and $( P - z ) \Pi_{j} = \Pi_{j + 1}$ for $1 \leq j \leq N - 1$. As a consequence of Theorem \ref{a17}, we have the following result on the generalized resonant states of bounded order.

\begin{proposition}\sl \label{a27}
Let $[ E_{0} , E_{1} ] \subset ] 0 , + \infty [$. There exists $a_{0} > R_{0}$ such that, for all $a_{0} < a < b$, $M > 0$, $J \in \N \setminus \{ 0 \}$ and $\varphi \in C^{\infty}_{0} ( \R^{n} )$, there exists $C > 0$ such that
\begin{equation*}
\Vert \varphi u \Vert \leq C \sqrt{\frac{h}{\vert \im z \vert}} e^{C \vert \im z \vert / h} \sum_{j = 0}^{J - 1} \frac{1}{\vert \im z \vert^{j}} \big\Vert \one_{\CC_{a,b}} ( P - z )^{j} u \big\Vert ,
\end{equation*}
for any generalized resonant state $u$ of order less than $J$ associated to a resonance $z \in [ E_{0} , E_{1} ] - i [ 0 , M h \vert \ln h \vert ]$ and $h$ small enough.
\end{proposition}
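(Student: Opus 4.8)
The plan is to deduce Proposition \ref{a27} from Theorem \ref{a17} by induction on the order $J$ of the generalized resonant state. Fix a generalized resonant state $u \in \im \Pi_{1}$ of order $\leq J$ associated to the resonance $z$, so that $(P - z)^{J} u = 0$. The key structural observation is that $(P - z) u$ is again a generalized resonant state, now of order $\leq J - 1$ (it still lies in $\im \Pi_{1}$ since $\im \Pi_{j} \subset \im \Pi_{1}$, as recalled after \eqref{a51}, and $(P-z)^{J-1}(P-z)u = 0$). Thus I want to estimate $\Vert \varphi u \Vert$ in terms of $\Vert \varphi (P-z) u \Vert$ plus a controlled ring-term, and then iterate.

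The heart of the matter is the case $J = 1$, which is precisely Theorem \ref{a17}, giving
\begin{equation*}
\Vert \varphi u \Vert \leq C \sqrt{\tfrac{h}{\vert \im z \vert}} e^{C \vert \im z \vert / h} \big\Vert \one_{\CC_{a,b}} u \big\Vert .
\end{equation*}
For the inductive step, suppose $u$ has order $\leq J$. I would like to write, for a suitable $\widetilde{\varphi} \in C_{0}^{\infty}(\R^{n})$ with $\varphi \prec \widetilde{\varphi}$ and $\one_{B(a_{0})} \prec \widetilde{\varphi}$, an identity of the form $\varphi u = \varphi (P - z)^{-1} \widetilde{\varphi} \,(P - z) u + (\text{boundary/commutator terms})$, analogous to the parametrix-type identities used to prove Theorems \ref{a13} and \ref{a1}. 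Applying the operator bound of Theorem \ref{a1} (or Theorem \ref{a13}) to $\varphi(P-z)^{-1}\widetilde{\varphi}$ — here one must take the limit $\lambda \to z$ so that $(P-\lambda)^{-1}\widetilde\varphi(P-z)u$ makes sense, using that $(P-z)u$ is annihilated by the singular part $\Pi_{N}/(z-\lambda)^{N} + \cdots$ up to order $J-1$, hence $\widetilde\varphi(P-\lambda)^{-1}\widetilde\varphi$ composed with $(P-z)u$ has at worst a pole of order $J-1$, killed after multiplying by $(z-\lambda)^{J-1}$ — one gets
\begin{equation*}
\Vert \varphi u \Vert \leq C \sqrt{\tfrac{h}{\vert \im z \vert}} e^{C \vert \im z \vert / h} \Big( \big\Vert \one_{\CC_{a,b}} u \big\Vert + \tfrac{1}{\vert \im z \vert}\big\Vert \one_{\CC_{a,b}}(P - z) u \big\Vert \cdot (\text{lower order}) \Big),
\end{equation*}
and then feeding the inductive hypothesis applied to $(P-z)u$ (order $\leq J-1$) produces the telescoping sum $\sum_{j=0}^{J-1}\vert \im z\vert^{-j}\Vert \one_{\CC_{a,b}}(P-z)^{j}u\Vert$. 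The powers of $\vert \im z\vert^{-1}$ accumulate exactly one per order, matching the stated bound, and the prefactors $\sqrt{h/\vert\im z\vert}\,e^{C\vert\im z\vert/h}$ can be absorbed into a single such factor with a larger constant $C$ since the sum over $j \leq J-1$ is finite and $J$ is fixed.

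The main obstacle is making the parametrix identity rigorous at the level of generalized resonant states rather than the resolvent: one cannot simply "apply $(P-z)^{-1}$" to $u$, so one must work with the truncated/distorted operator $P_{\theta}$ (using \eqref{a34}) and a careful decomposition of $u$ near the boundary $\partial B(R_{0})$ and near $\supp \varphi$, reproducing the mechanism behind Theorem \ref{a1} but with the inhomogeneous term $(P-z)u$ in place of $0$. The second delicate point is bookkeeping the order of the pole in the Laurent expansion \eqref{a51} when passing to the limit $\lambda \to z$: one needs that $(P-z)^{j}u$ is annihilated by $(P-z)^{J-j}$ and that the corresponding multiplication by powers of $(z-\lambda)$ is exactly compensated by the extracted factors of $\vert \im z\vert^{-1}$ — this is where the precise form of the sum in the statement comes from, and where an off-by-one error is easiest to make. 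Everything else — commutator estimates, the ellipticity bound \eqref{d1}, absorbing constants — is routine given the machinery already developed for Theorems \ref{a13}, \ref{a1} and \ref{a17}.
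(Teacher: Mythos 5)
Your induction skeleton matches the paper's, and your base case is indeed Theorem \ref{a17}, but the inductive step as you sketch it has a genuine gap. You want to write $\varphi u = \varphi (P-z)^{-1}\widetilde\varphi\,(P-z)u + \text{(boundary terms)}$ and invoke Theorem \ref{a1}, yet $z$ is a resonance: $(P-\lambda)^{-1}$ has a pole at $\lambda=z$, Theorem \ref{a1} is only available for non-resonant spectral parameters, and its right-hand side contains $\Vert \one_{\CC_{a,b}}(P-\lambda)^{-1}\one_{\CC_{a,b}}\Vert$, which itself blows up as $\lambda\to z$, so the limit does not produce the stated bound without extracting finite parts. Your claim that the singular part of \eqref{a51} annihilates $\widetilde\varphi\,(P-z)u$ ``up to order $J-1$'' would require $\Pi_j\widetilde\varphi\,(P-z)u=0$ for the relevant $j$, which is not known, and no factor $(z-\lambda)^{J-1}$ actually appears in your identity to absorb the remaining pole. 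This limiting strategy is precisely the one the paper says works only for Schr\"odinger operators with simple resonances (Remark \ref{a18} $iii)$), where the rank-one structure $\Pi_1 = c\,u\<\bar u,\cdot\>$ is available; without it, your route would at best yield the prefactor $h/\vert\im z\vert$ of Theorem \ref{a1} rather than the sharper $\sqrt{h/\vert\im z\vert}$ in the statement, which in the paper comes from a quadratic, state-level identity and not from resolvent bounds. You do flag the need to ``make the parametrix identity rigorous'' via the distorted operator, but that is exactly the missing content, not a routine afterthought.

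The paper's proof avoids $(P-z)^{-1}$ altogether. To bound $\Vert\chi u\Vert$ it reruns the commutator identity \eqref{a19} with the inhomogeneous term $(P-z)u\neq 0$, obtaining $\Vert\chi u\Vert\lesssim \sqrt{h/\vert\im z\vert}\,\big(\Vert\one_{\CC_{a,b}}u\Vert+\Vert\one_{\CC_{a,b}}(P-z)u\Vert\big)+\vert\im z\vert^{-1}\Vert\chi(P-z)u\Vert$, and then feeds the induction hypothesis to the order-$(J-1)$ state $(P-z)u$. To pass to a general $\varphi$ it uses Lemma \ref{a39} to replace $u$ by a distorted $u_\theta$ with $(P_\theta-z)^J u_\theta=0$, computes $(R_{a,\theta}-z)^J(1-\chi)u_\theta=-\sum_{j=0}^{J-1}(\ad_P^{J-j}\chi)(P-z)^j u$, and inverts the power of the auxiliary non-trapping operator, whose resolvent has no pole at $z$ and whose powers are controlled by Lemma \ref{a9} together with $\Vert\ad_P^{j}\chi\Vert_{H^s_h\to H^{s-j}_h}=\CO(h^j)$. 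To repair your argument you would essentially have to reconstruct these two ingredients; as written, the key identity and the pole cancellation are asserted rather than proved.
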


The rest of this paper is organized as follows. In Section \ref{s2}, we prove Theorem \ref{a13} by constructing an auxiliary non-trapping operator which coincides with $P$ at infinity. Section \ref{s3} is devoted to the proof of Theorem \ref{a1}. The main idea is to exploit the formula
\begin{equation*}
\< \chi ( P - z ) u , \chi u \> - \< \chi u , \chi ( P - z ) u \> = \big\< [ \chi^{2} , P ] u , u \big\> - 2 i \im z \Vert \chi u \Vert^{2} ,
\end{equation*}
which is generally used to compute imaginary parts of resonances (see e.g. Helffer and Sj\"{o}strand \cite[Page 155]{HeSj86_01}). Proposition \ref{a10} is proved in Section \ref{s4} by building a well-chosen quasimode. The estimates concerning the resonant states are obtained in Section \ref{s5} using ideas similar to those of Section \ref{s3}. In Section \ref{s6}, we apply our results to the case of obstacle scattering and we make the link with the work of Stoyanov and the second author \cite{PeSt09_01}. Finally, we give some basic properties of the generalized resonant states in Appendix \ref{s7}.

\section{Proof of Theorem \ref{a13}} \label{s2}

First, we construct a non-trapping operator by planing $Q$ in a large compact set. This idea has been used by Robert and Tamura \cite{RoTa87_01} (see also Bruneau and the second author \cite{BrPe00_01} for trapping situations) to estimate the weighted resolvent on the real axis in non-trapping situations. Secondly, we recall the standard estimate of the cut-off resolvent associated to this new auxiliary operator. Let $\tau , \nu \in C^{\infty} ( \R^{n} ; [ 0 , 1 ] )$ be such that
\begin{equation*}
\one_{B ( 1/2 )} \prec \tau \prec \one_{B ( 1 )} ,
\end{equation*}
and $\tau^{2} + \nu^{2} = 1$ on $\R^{n}$. For $a > 0$, we define
\begin{equation*}
R_{a} = \nu \Big( \frac{x}{a} \Big) Q \nu \Big( \frac{x}{a} \Big) - \tau \Big( \frac{x}{a} \Big) h^{2} \Delta \tau \Big( \frac{x}{a} \Big) ,
\end{equation*}
a differential operator of order $2$ whose semiclassical principal symbol is
\begin{equation*}
r_{a} (x , \xi ) = q ( x , \xi ) \nu^{2} \Big( \frac{x}{a} \Big) + \xi^{2} \tau^{2} \Big( \frac{x}{a} \Big) .
\end{equation*}
In particular, $\xi^{2} / C - C \leq r_{a} \leq C \xi^{2} + C$ uniformly for $a > 0$. Moreover, using the assumption \eqref{d1}, a direct computation yields
\begin{align}
\{ r_{a} , x \cdot \xi \} &= \{ \xi^{2} , x \cdot \xi \} + \Big\{ ( q - \xi^{2} ) \nu^{2} \Big( \frac{x}{a} \Big) , x \cdot \xi \Big\}   \nonumber  \\
&= 2 \xi^{2} + o_{a \to + \infty} ( \< \xi \>^{2} ) = 2 r_{a} + o_{a \to + \infty} ( \< \xi \>^{2} ) \geq E_{0} / 2 > 0 , \label{a21}
\end{align}
for $r_{a} ( x , \xi ) \in [ E_{0} / 2 , 2 E_{1} ]$ and $a > a_{0}$ with $a_{0} > R_{0}$ sufficiently large. This implies that, for $a > a_{0}$, the symbol $r_{a}(x, \xi)$ is non-trapping on $r_a^{-1}(E)$ for all energies $E$ lying in the interval $[ E_{0} / 2 , 2 E_{1} ]$. Then, we can apply a result of Nakamura, Stefanov and Zworski \cite{NaStZw03_01} (see also Martinez \cite{Ma02_01}) which yields the following resolvent estimate.

\begin{lemma}\sl \label{a9}
For all $j \in \N$, $s \in \R$, $M > 0$ and $\varphi \in C^{\infty}_{0} ( \R^{n} )$, there exists $C > 0$ such that
\begin{equation*}
\big\Vert \varphi ( R_{a} - z )^{- j} \varphi \big\Vert_{H^{s}_{h} \to H^{s + 2}_{h}} \leq C \frac{e^{C \vert \im z \vert / h}}{h^{j}} ,
\end{equation*}
for $z \in [ E_{0} , E_{1} ] - i [ 0 , M h \vert \ln h \vert ]$ and $h$ small enough. Here,
\begin{equation*}
H^{s}_{h} ( \R^{n} ) = \big\{ u \in \CS^{\prime} ( \R^{n} ) ; \ \< h D_x \>^{s} u \in L^2 ( \R^{n} ) \big\} ,
\end{equation*}
is the semiclassical Sobolev space equipped with the norm $\Vert u \Vert_{H^{s}_{h}} = \Vert \< h D_x \>^{s} u \Vert_{L^{2}}$.
\end{lemma}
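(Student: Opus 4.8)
The plan is to reduce Lemma \ref{a9} to the classical non-trapping cut-off resolvent estimate for $R_{a}$ and then to propagate that estimate from the real axis into the logarithmic band and from $j=1$ to arbitrary $j$. For $j=1$ and $s=0$: by \eqref{a21} the function $x\cdot\xi$ is an escape function for $r_{a}$ on every energy surface $r_{a}^{-1}(E)$ with $E\in[E_{0}/2,2E_{1}]$, so $R_{a}$ is non-trapping there, and $R_{a}$ is a genuine second-order semiclassical differential operator on $L^{2}(\R^{n})$ (not a black box), so the usual microlocal machinery applies directly. The non-trapping resolvent estimate of Robert and Tamura \cite{RoTa87_01} gives $\|\varphi(R_{a}-E)^{-1}\varphi\|_{L^{2}\to L^{2}}\leq C/h$ for real $E\in[E_{0},E_{1}]$, while by Martinez \cite{Ma02_01} and Nakamura, Stefanov and Zworski \cite{NaStZw03_01} the cut-off resolvent $\varphi(R_{a}-z)^{-1}\varphi$ extends holomorphically from $\{\im z>0\}$ to a neighborhood of $[E_{0},E_{1}]$ minus a band $\{0\geq\im z\geq -Ch|\ln h|\}$, with an a priori bound polynomial in $1/h$ there. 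The semiclassical maximum principle of Tang and Zworski \cite{TaZw98_01} interpolates these two facts; since $(1/h)^{|\im z|/(h|\ln h|)}=e^{|\im z|/h}$, it produces
\[
\big\|\varphi(R_{a}-z)^{-1}\varphi\big\|_{L^{2}\to L^{2}}\leq C\,\frac{e^{C|\im z|/h}}{h}, \qquad z\in[E_{0},E_{1}]-i[0,Mh|\ln h|],
\]
which is where the exponential factor in the statement comes from.

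Next I would upgrade this $L^{2}\to L^{2}$ bound to $H^{s}_{h}\to H^{s+2}_{h}$. The gain of two derivatives is semiclassical elliptic regularity: $R_{a}-z$ is elliptic for $|\xi|$ large because $r_{a}(x,\xi)-E\gtrsim\<\xi\>^{2}$ there, so for cut-offs $\varphi\prec\psi$ one has, up to an $O(h^{\infty})$ smoothing remainder, $\|\varphi u\|_{H^{s+2}_{h}}\lesssim\|\psi(R_{a}-z)u\|_{H^{s}_{h}}+\|\psi u\|_{H^{s}_{h}}$; applying this to $u=(R_{a}-z)^{-1}\varphi f$ and combining with the previous bound and the fact that multiplication by a fixed compactly supported cut-off is bounded on every $H^{s}_{h}$ uniformly in $h$ yields the $j=1$ estimate for all $s$. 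The shift from $s=0$ to general $s$ is obtained by conjugating $R_{a}$ by $\<hD_{x}\>^{s}$: the conjugated operator has the same semiclassical principal symbol $r_{a}$, hence is again non-trapping on the relevant energy surfaces, so the same argument reapplies.

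Finally, to reach arbitrary $j$ I would use the Cauchy integral formula. Enlarging $[E_{0},E_{1}]$ and $M$ slightly, the function $w\mapsto\varphi(R_{a}-w)^{-1}\varphi$ is holomorphic, with the $j=1$ bound just proved, on a neighborhood of the band under consideration, and
\[
\varphi(R_{a}-z)^{-j}\varphi=\frac{1}{2\pi i}\oint_{|w-z|=h}\frac{\varphi(R_{a}-w)^{-1}\varphi}{(w-z)^{j}}\,dw .
\]
On the circle $|w-z|=h$ one has $|\im w|\leq|\im z|+h$, hence $e^{C|\im w|/h}\leq e^{C}e^{C|\im z|/h}$, so estimating the integral by length times supremum gives
\[
\big\|\varphi(R_{a}-z)^{-j}\varphi\big\|_{H^{s}_{h}\to H^{s+2}_{h}}\leq\frac{1}{2\pi}\cdot2\pi h\cdot h^{-j}\cdot C\,\frac{e^{C}e^{C|\im z|/h}}{h}=C'\,\frac{e^{C|\im z|/h}}{h^{j}},
\]
which is the claim. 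I do not expect a genuine obstacle: the analytic content is carried entirely by the cited non-trapping estimates — in particular the resonance-free-band plus maximum-principle step, which creates the exponential — and the rest is bookkeeping, the one point that needs a little care being to verify that the elliptic upgrade, the $s$-shift, and the Cauchy estimate introduce no further negative power of $h$.
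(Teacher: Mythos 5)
Your proposal follows essentially the same route as the paper's proof: a cut-off resolvent bound $C e^{C \vert \im z \vert / h} / h$ for $R_{a}$ in the logarithmic band coming from the non-trapping property (the paper simply quotes Proposition 3.1 of Nakamura--Stefanov--Zworski together with \eqref{a34}, whereas you re-derive it from the real-axis bound plus a resonance-free band and the semiclassical maximum principle, which is the same mechanism), then the Cauchy formula on a circle of radius $h$ to handle the powers $j \geq 2$, and finally an elliptic upgrade from $L^{2} \to L^{2}$ to $H^{s}_{h} \to H^{s + 2}_{h}$; the fact that you do the Sobolev upgrade before the Cauchy step rather than after is immaterial.

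Two points need tightening. First, when $\im z$ is within $h$ of the real axis, part of your Cauchy circle lies in $\{ \im w > 0 \}$, where the inputs you state (real-axis bound plus maximum principle in the lower half-plane) give no estimate; the paper covers this strip by the uniform non-trapping limiting absorption bound for $\im w > 0$ (Mourre theory, see Vasy--Zworski), and you need the same. Second, the shift to general $s$ by conjugating with $\< h D_x \>^{s}$ is not automatic as stated: the conjugated operator is a non-self-adjoint $\CO ( h )$ pseudodifferential perturbation of $R_{a}$, and the cited non-trapping resolvent estimates are results about self-adjoint operators, so ``the same argument reapplies'' requires justification. The paper avoids this by using the elliptic equivalence $\Vert u \Vert_{H^{2 k}_{h}} \simeq \Vert ( R_{a} + i )^{k} u \Vert_{L^{2}}$, commuting $( R_{a} + i )^{k}$ through $\varphi ( R_{a} - z )^{- j} \varphi$, and interpolating; arguing this way (or by duality from the $L^{2} \to H^{2}_{h}$ bound you already have) closes the gap without introducing any further loss in $h$.
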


\begin{proof}
Since the operator $R_{a}$ is non-trapping on the energies in $[ E_{0} / 2 , 2 E_{1} ]$, we have
\begin{equation*}
\big\Vert \varphi ( R_{a} - z )^{-1} \varphi \big\Vert_{L^{2} \to L^{2}} \leq C \frac{e^{C \vert \im z \vert / h}}{h} ,
\end{equation*}
for $z \in [ E_{0} , E_{1} ] - i [ 0 , M h \vert \ln h \vert ] + B (h)$. This estimate follows from Proposition 3.1 of Nakamura, Stefanov and Zworski \cite{NaStZw03_01} and \eqref{a34} for $\im z \leq 0$ and from the usual Mourre theory (see e.g. Vasy and Zworski \cite{VaZw00_01}) for $\im z > 0$. In particular, for $z \in [ E_{0} , E_{1} ] - i [ 0 , M h \vert \ln h \vert ]$, it yields
\begin{equation*}
\big\Vert \varphi ( R_{a} - \lambda )^{-1} \varphi \big\Vert_{L^{2} \to L^{2}} \leq C \frac{e^{C \vert \im z \vert / h}}{h} ,
\end{equation*}
for all $\lambda \in z + B ( h )$. Then, the Cauchy formula implies
\begin{equation*}
\varphi ( R_{a} - z )^{- j} \varphi = \frac{1}{(j - 1 ) !} \partial_{z}^{j - 1} \varphi ( R_{a} - z )^{-1} \varphi = \frac{1}{2 i \pi} \oint_{z + \partial B (h)} \varphi ( R_{a} - \lambda )^{-1} \varphi \frac{d \lambda}{( \lambda - z )^{j}} ,
\end{equation*}
and then
\begin{equation} \label{a42}
\big\Vert \varphi ( R_{a} - z )^{-j} \varphi \big\Vert_{L^{2} \to L^{2}} \leq C \frac{e^{C \vert \im z \vert / h}}{h^{j}} .
\end{equation}
It remains to bound this operator from $H^{s}_{h}$ to $H^{s + 2}_{h}$. Since $R_{a}$ is an elliptic differential operator of order $2$, we have
\begin{equation*}
\Vert u \Vert_{H^{2 k}_{h}} \simeq \big\Vert ( R_{a} + i )^{k} u \big\Vert_{L^{2}} ,
\end{equation*}
for all $k \in \Z$. Thus, performing multiple commutations between $R_{a} + i$ and $\varphi ( R_{a} - \lambda )^{-j} \varphi$ and using \eqref{a42}, a standard argument gives
\begin{equation*}
\big\Vert \varphi ( R_{a} - z )^{-j} \varphi \big\Vert_{H^{2 k}_{h} \to H^{2 k + 2}_{h}} \leq C_{k} \frac{e^{C \vert \im z \vert / h}}{h^{j}} ,
\end{equation*}
for all $k \in \Z$. And the lemma follows from an interpolation argument.
\end{proof}

We now prove Theorem \ref{a13}. Assume that $\one_{B ( a )} \prec \chi$ with $a > a_{0}$ where $a_{0} > R_{0}$ is given by Proposition \ref{a10} and Lemma \ref{a9}. Let $\chi_{1} , \chi_{2} \in C^{\infty}_{0} ( \R^{n} )$ be such that
\begin{equation} \label{a14}
\one_{B ( a )} \prec \chi_{1} \prec \chi_{2} \prec \chi .
\end{equation}
In particular, $P ( 1 - \chi_{\bullet} ) = R_{a} ( 1 - \chi_{\bullet} )$. For $\im z > 0$ and then for $z \in \Lambda_{\theta_{0}}$ by meromorphic extension, we can write
\begin{align}
\varphi ( P - z )^{-1} \varphi ={}& \varphi \one_{\R^{n} \setminus B ( R_{0} )} ( R_{a} - z )^{-1} ( 1 - \chi_{1} ) \varphi + \varphi \chi_{1} ( P - z )^{-1} \chi_{2} \varphi    \nonumber  \\
&+ \varphi \chi_{1} ( P - z )^{-1} [ P , \chi_{2} ] ( R_{a} - z )^{-1} \one_{\R^{n} \setminus B ( R_{0} )} \varphi   \nonumber  \\
&- \varphi \one_{\R^{n} \setminus B ( R_{0} )} ( R_{a} - z )^{-1} [ P , \chi_{1} ] ( P - z )^{-1} \chi_{2} \varphi  \nonumber  \\
&- \varphi \one_{\R^{n} \setminus B ( R_{0} )} ( R_{a} - z )^{-1} [ P , \chi_{1} ] ( P - z )^{-1} [ P , \chi_{2} ] ( R_{a} - z )^{-1} \one_{\R^{n} \setminus B ( R_{0} )} \varphi .  \label{a16}
\end{align}
To prove this identity for $\im z > 0$, the cut-off function $\varphi$ can be omitted and it is enough to expand the commutator $[ P , \chi_{2} ]$ and then the commutator $[ P , \chi_{1} ]$, and to use the formula $[ P , \chi_{\bullet} ] = ( P - z ) ( \chi_{\bullet} - 1 ) - ( \chi_{\bullet} - 1 ) ( P - z )$. The properties of the $\chi_{\bullet}$'s given in \eqref{a14} imply that
\begin{equation} \label{a15}
[ P , \chi_{\bullet} ] = \chi (x) \< h \nabla \> h \CO ( 1 ) \chi (x) ,
\end{equation}
where the $\CO ( 1 )$ denotes an operator bounded uniformly in $h$ on $L^{2} ( \R^{n} )$. Combining Lemma \ref{a9}, \eqref{a16} and \eqref{a15} (with its adjoint), we finally obtain
\begin{align}
\big\Vert \varphi ( P - z )^{-1} \varphi \big\Vert \lesssim{}& \big\Vert \varphi ( R_{a} - z )^{-1} \varphi \big\Vert + \big\Vert \chi ( P - z )^{-1} \chi \big\Vert   \nonumber  \\
&+ h \big\Vert \chi ( P - z )^{-1} \chi \big\Vert \big\Vert \chi ( R_{a} - z )^{-1} \varphi \big\Vert_{L^{2} \to H^{1}_{h}} \nonumber  \\
&+ h \big\Vert \chi ( P - z )^{-1} \chi \big\Vert \big\Vert \varphi ( R_{a} - z )^{-1} \chi \big\Vert_{H^{-1}_{h} \to L^{2}} \nonumber  \\
&+ h^{2} \big\Vert \chi ( P - z )^{-1} \chi \big\Vert \big\Vert \varphi ( R_{a} - z )^{-1} \chi \big\Vert_{H^{-1}_{h} \to L^{2}} \big\Vert \chi ( R_{a} - z )^{-1} \varphi \big\Vert_{L^{2} \to H^{1}_{h}}  \nonumber  \\
\lesssim{}& \frac{e^{C \vert \im z \vert / h}}{h} + \big\Vert \chi ( P - z )^{-1} \chi \big\Vert \Big( 1 + e^{2 C \vert \im z \vert / h} \Big) .
\end{align}
To complete the proof of Theorem \ref{a13}, it is enough to use Proposition \ref{a10}.

\section{Proof of Theorem \ref{a1}} \label{s3}

We will first estimate $\chi_{1} ( P - z )^{- 1} \chi_{1}$ for a particular cut-off function $\chi_{1}$ adapted to the ring $\CC_{a,b}$ and then apply Theorem \ref{a13} to estimate $\varphi ( P - z )^{- 1} \varphi$ for all $\varphi \in C^{\infty}_{0} ( \R^{n} )$. Let $\chi_{1} , \chi_{2} , \chi_{3} , \chi_{4} \in C^{\infty}_{0} ( \R^{n} )$ be such that $\one_{B ( a )} \prec \chi_{1} \prec \chi_{2} \prec \chi_{3} \prec \chi_{4} \prec \one_{B ( b )}$. We also consider $\psi_{1} , \psi_{2} , \psi_{3} , \psi_{4} \in C^{\infty}_{0} ( \R^{n} )$ such that $\nabla \chi_{1} \prec \psi_{1} \prec \psi_{2} \prec \chi_{2} \one_{\CC_{a,b}}$, $\nabla \chi_{3} \prec \psi_{3} \prec \psi_{4} \prec \chi_{4} \one_{\CC_{a,b}}$ and $\chi_{2} \psi_{4} = 0$. We begin with the following estimates.

\begin{lemma}\sl
For $f \in \CH_{\rm comp}$ and $z \in \Lambda_{\theta_{0}}$ with $\vert \im z \vert \leq 1$, we have
\begin{gather}
\big\Vert \chi_{3} ( P - z )^{- 1} f \big\Vert^{2} \lesssim \frac{1}{\vert \im z \vert^{2}} \Vert \chi_{4} f \Vert^{2} + \frac{h}{\vert \im z \vert} \big\Vert \psi_{4} ( P - z )^{- 1} f \big\Vert^2 ,    \label{a6}  \\
\big\Vert \chi_{1} {( P - z )^{- 1}}^{*} f \big\Vert^{2} \lesssim \frac{1}{\vert \im z \vert^{2}} \Vert \chi_{2} f \Vert^{2} + \frac{h}{\vert \im z \vert} \big\Vert \psi_{2} {( P - z )^{- 1}}^{*} f \big\Vert^2 .  \label{a7}
\end{gather}
\end{lemma}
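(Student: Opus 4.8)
The plan is to derive both \eqref{a6} and \eqref{a7} from the basic identity mentioned in the introduction, namely, for $u = (P-z)^{-1} f$ (or its adjoint analogue) and a real cut-off $\chi$,
\begin{equation*}
\big\< \chi (P-z) u , \chi u \big\> - \big\< \chi u , \chi (P-z) u \big\> = \big\< [ \chi^{2} , P ] u , u \big\> - 2 i \, \im z \, \Vert \chi u \Vert^{2} .
\end{equation*}
The left-hand side is $2 i \im \big\< \chi (P-z) u , \chi u \big\> = 2 i \im \< \chi f , \chi u \>$, so taking imaginary parts and rearranging yields
\begin{equation*}
2 \vert \im z \vert \, \Vert \chi u \Vert^{2} = \im \big\< [ \chi^{2} , P ] u , u \big\> - 2 \, \im \< \chi f , \chi u \> .
\end{equation*}
I would apply this with $\chi = \chi_{3}$ for \eqref{a6}. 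The term $\im \< \chi_3 f , \chi_3 u \>$ is bounded by $\Vert \chi_4 f \Vert \, \Vert \chi_3 u \Vert$ (since $\chi_4 \prec \one_{B(b)}$ is $1$ near $\supp \chi_3$, so $\chi_3 f = \chi_3 \chi_4 f$), and absorbed via Cauchy--Schwarz with a small constant into the left-hand side at the cost of a $\vert \im z \vert^{-1} \Vert \chi_4 f \Vert^2$ term. The commutator term is the substantive one.

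For the commutator, $[\chi_3^2, P] = [\chi_3^2, Q]$ acting in $\R^n \setminus B(R_0)$ (the cut-offs being constant near $B(R_0)$), and since $Q$ is a second-order semiclassical differential operator, $[\chi_3^2, P] = h \, A$ where $A$ is a first-order semiclassical differential operator whose coefficients are supported in $\supp \nabla \chi_3$. By the choice $\nabla \chi_3 \prec \psi_3 \prec \psi_4$, one can write $A = \psi_4 \, \widetilde{A} \, \psi_4 + (\text{lower order, localized on } \supp\psi_4)$, so that $\vert \< [\chi_3^2,P] u, u\> \vert \lesssim h \, \Vert \psi_4 u \Vert \, \Vert \psi_4 u \Vert_{H^1_h}$; more carefully, integrating by parts to distribute one $h\nabla$ onto each side and using that the principal part of $Q$ has $h$-independent coefficients, one gets $\vert \< [\chi_3^2,P] u, u\> \vert \lesssim h \, \Vert \psi_4 u \Vert^2$ up to a remainder of the form $h \Vert \psi_4 u\Vert \cdot h\Vert \psi_4 u\Vert$ which is lower order. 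Dividing by $2|\im z|$ then gives exactly the stated $\frac{h}{|\im z|}\Vert \psi_4 (P-z)^{-1} f\Vert^2$ term. The inequality \eqref{a7} is proved identically, replacing $(P-z)^{-1}$ by its adjoint $(P-z)^{-1*} = (P - \bar z)^{-1}$ (recall $P$ is self-adjoint, so the adjoint is just the resolvent at $\bar z$, which has $\im \bar z = |\im z| > 0$ and lies in the physical region where things are cleaner), and using $\chi_1$ in place of $\chi_3$, $\chi_2$ in place of $\chi_4$, $\psi_2$ in place of $\psi_4$; the sign flip of $\im z$ is harmless since only $|\im z|$ appears.

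The main obstacle I anticipate is the careful bookkeeping of the commutator term: one must ensure that after integration by parts the resulting quadratic form in $u$ is genuinely controlled by $\Vert \psi_4 u\Vert^2$ (and not, say, by $\Vert \psi_4 u\Vert_{H^1_h}^2$, which would be too weak for the eventual iteration) — this requires exploiting that the top-order coefficients $a_\alpha$, $|\alpha|=2$, are real and $h$-independent so that the leading commutator contribution $h \{\chi_3^2, q\}$ is a first-order real symbol, and that one $h\partial$ falling on $\chi_3^2$ already produces the gain, with the remaining derivative paired against $u$ on one side and $u$ on the other via the quadratic-form structure. A secondary technical point is justifying that all manipulations are legitimate for $u = (P-z)^{-1}f \in \CD_{\mathrm{loc}}$ with $f$ compactly supported — this is standard since the cut-offs localize everything, and for $\im z > 0$ one has genuine $\CH$-functions, extending to $\Lambda_{\theta_0}$ by the meromorphic continuation and \eqref{a34}. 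I would present \eqref{a6} in detail and remark that \eqref{a7} follows by the same argument applied to the adjoint.
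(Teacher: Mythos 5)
Your overall strategy (the commutator/imaginary-part identity with $u=(P-z)^{-1}f$, Cauchy--Schwarz on the $\< \chi_3 f,\chi_3 u\>$ term, adjoint argument for \eqref{a7}) is the same as the paper's, but the step you flag as the ``main obstacle'' is exactly where your argument breaks down, and your proposed resolution does not work. The commutator $[\chi_3^2,P]=h A$ contains only \emph{one} semiclassical derivative, so there is nothing to ``distribute onto each side'': after symmetrizing, the genuinely first-order part of $\<A u,u\>$ is a flux-type term, of the form $\int a(x)\cdot h\im(\bar u\,\nabla u)\,dx$ with $a$ real and supported in $\supp\nabla\chi_3$, and this is \emph{not} bounded by $\Vert\psi_4 u\Vert^2$ for general $u\in\CD_{\rm loc}$. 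For instance, with $P=-h^2\Delta$ and $u=e^{ix\cdot\xi_0/h}\phi(x)$ one gets $\<[\chi_3^2,P]u,u\>\approx -2h\,\xi_0\cdot\int\nabla(\chi_3^2)\,\phi^2\,dx$, which grows linearly in $\vert\xi_0\vert$ while $\Vert\psi_4 u\Vert$ stays fixed. The realness and $h$-independence of the top-order coefficients cannot save this; your honest first bound $h\Vert\psi_4 u\Vert\,\Vert\psi_4 u\Vert_{H^1_h}$ is the best one can do without using the equation, and it is too weak for the lemma as stated.

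The missing ingredient is precisely to use that $u$ solves $(P-z)u=f$ in order to trade the leftover derivative for $f$ and $u$ themselves. The paper does this by inserting $(P+i)^{-1}(P+i)$: since $\nabla\chi_3\prec\psi_3$, one has $[\chi_3^2,P]u=[\chi_3^2,P](P+i)^{-1}(P+i)\psi_3 u$ and $(P+i)\psi_3 u=\psi_3 f+(z+i)\psi_3 u+[P,\psi_3]u$; the operator $[\chi_3^2,P](P+i)^{-1}$ is $\CO(h)$ on $L^2$ by ellipticity, and $[\chi_3^2,P](P+i)^{-1}[P,\psi_3]$ is $\CO(h^2)$, so $\Vert[\chi_3^2,P]u\Vert\lesssim h\Vert\chi_4 f\Vert+h\Vert\psi_4 u\Vert$, which then gives \eqref{a6} after Cauchy--Schwarz. (Equivalently, you could patch your argument with a local semiclassical elliptic estimate $\Vert\psi_3 u\Vert_{H^1_h}\lesssim\Vert\psi_4 u\Vert+\Vert\psi_4(P-z)u\Vert$, since $\vert\im z\vert\leq 1$ makes the extra $h\vert\im z\vert^{-1}\Vert\psi_4 f\Vert^2$ term absorbable into $\vert\im z\vert^{-2}\Vert\chi_4 f\Vert^2$.) A minor further caveat: for \eqref{a7}, ${(P-z)^{-1}}^{*}=(P-\bar z)^{-1}$ is the continuation of the \emph{incoming} resolvent, not the physical resolvent at $\bar z$, though this does not affect the algebra of the argument.
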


\begin{proof}
For $u \in \CD_{\rm loc}$, we have
\begin{equation*}
\< \chi_{3} (P - z) u, \chi_{3} u \> - \< \chi_{3} u , \chi_{3} ( P - z ) u \> = \< [ \chi_{3}^{2} , P ] u, u \> - 2 i \im z \Vert \chi_{3} u \Vert^{2} .
\end{equation*}
Taking $u = ( P - z )^{-1} f$ yields
\begin{align}
\vert \im z \vert \big\Vert \chi_{3} ( P - z )^{-1} f \big\Vert^{2} \lesssim{}& \big\Vert \chi_{3} ( P - z )^{-1} f \big\Vert \Vert \chi_{3} f \Vert   \nonumber \\
&+ \big\Vert [ \chi_{3}^{2} , P ] ( P - z )^{-1} f \big\Vert \big\Vert \psi_{4} ( P - z )^{-1} f \big\Vert .   \label{a4}
\end{align}
Moreover, combining \eqref{a15}, the ellipticity of $P$ and the properties of the support of the cut-off functions, we obtain
\begin{align}
[ \chi_{3}^{2} , P ] ( P - z )^{-1} f ={}& [ \chi_{3}^{2} , P ] ( P + i )^{-1} ( P + i ) \psi_{3} ( P - z )^{-1} f  \nonumber \\
={}& [ \chi_{3}^{2} , P ] ( P + i )^{-1} \big( \psi_{3} ( P + i ) + [ P , \psi_{3} ] \big) ( P - z )^{-1} f  \nonumber \\
={}& [ \chi_{3}^{2} , P ] ( P + i )^{-1} \psi_{3} f  \nonumber \\
&+ [ \chi_{3}^{2} , P ] ( P + i )^{-1} \big( ( i + z ) \psi_{3} + [ P , \psi_{3} ] \big) ( P - z )^{-1} f  \nonumber \\
={}& \CO ( h ) \Vert \chi_{4} f \Vert + \CO ( h ) \big\Vert \psi_{4} ( P - z )^{-1} f \big\Vert .   \label{a5}
\end{align}
Combining \eqref{a4} and \eqref{a5}, we obtain
\begin{align*}
\vert \im z \vert \big\Vert \chi_{3} ( P - z )^{-1} f \big\Vert^{2} \leq{}& \frac{\vert \im z \vert}{2} \big\Vert \chi_{3} ( P - z )^{-1} f \big\Vert^{2} + \frac{C}{\vert \im z \vert} \Vert \chi_{3} f \Vert^{2}    \\
&+ C h\Vert \chi_{4} f \Vert^{2} + C h \big\Vert \psi_{4} ( P - z )^{-1} f \big\Vert^{2}  \\
\lesssim{}& \frac{1}{\vert \im z \vert} \Vert \chi_{4} f \Vert^{2} + h \big\Vert \psi_{4} ( P - z )^{-1} f \big\Vert^{2} .
\end{align*}
This implies \eqref{a6}. The estimate for the adjoint operator \eqref{a7} can be proved by the same argument using ${( P - z )^{- 1}}^{*} = ( P - \bar{z} )^{-1}$.
\end{proof}

We can now prove Theorem \ref{a1}. Recall that, for simplicity, we use the notation $\Vert \cdot \Vert$ for the norm of the space $\CH$ and the operator norm on $\CH$. To be more precise, in the rest of this section, $\Vert \cdot \Vert$ denotes the norm of $\CH$ only when $f$ or $u$ appears in the expression. From \eqref{a6}, we can write
\begin{align*}
\big\Vert \chi_{1} ( P - z )^{- 1} \chi_{1} f \big\Vert^{2} &\leq \big\Vert \chi_{3} ( P - z )^{- 1} \chi_{1} f \big\Vert^{2}     \\
&\lesssim \frac{1}{\vert \im z \vert^{2}} \Vert \chi_{4} \chi_{1} f \Vert^{2} + \frac{h}{\vert \im z \vert} \big\Vert \psi_{4} ( P - z )^{- 1} \chi_{1} f \big\Vert^{2}    \\
&\leq \frac{1}{\vert \im z \vert^{2}} \Vert f \Vert^{2} + \frac{h}{\vert \im z \vert} \big\Vert \chi_{1} {( P - z )^{- 1}}^{*} \psi_{4} \big\Vert^{2} \Vert f \Vert^{2} .
\end{align*}
Using now \eqref{a7} and $\chi_{2} \psi_{4} = 0$, we get
\begin{align*}
\big\Vert \chi_{1} ( P - z )^{- 1} \chi_{1} f \big\Vert^{2} &\lesssim \frac{1}{\vert \im z \vert^{2}} \Vert f \Vert^{2} + \frac{h}{\vert \im z \vert} \Vert f \Vert^{2} \sup_{\Vert u \Vert = 1} \big\Vert \chi_{1} {( P - z )^{- 1}}^{*} \psi_{4} u \big\Vert^{2}    \\
&\lesssim \frac{1}{\vert \im z \vert^{2}} \Vert f \Vert^{2} + \frac{h}{\vert \im z \vert} \Vert f \Vert^{2} \sup_{\Vert u \Vert = 1} \Big( \frac{1}{\vert \im z \vert^{2}} \Vert \chi_{2} \psi_{4} u \Vert^{2}   \\
&\qquad \qquad \qquad \qquad \qquad \qquad + \frac{h}{\vert \im z \vert} \big\Vert \psi_{2} {( P - z )^{- 1}}^{*} \psi_{4} u \big\Vert^{2} \Big)    \\
&=\frac{1}{\vert \im z \vert^{2}} \Vert f \Vert^{2} + \frac{h^{2}}{\vert \im z \vert^{2}} \big\Vert \psi_{4} ( P - z )^{- 1} \psi_{2} \big\Vert^{2} \Vert f \Vert^{2} .   \\
\end{align*}
Combining with $\psi_{\bullet} \prec \one_{\CC_{a,b}}$ yields
\begin{equation*}
\big\Vert \chi_{1} ( P - z )^{- 1} \chi_{1} \big\Vert \lesssim \frac{1}{\vert \im z \vert} + \frac{h}{\vert \im z \vert} \big\Vert \one_{\CC_{a,b}} ( P - z )^{- 1} \one_{\CC_{a,b}} \big\Vert .
\end{equation*}
We now apply Theorem \ref{a13} and assume that $a \geq a_{0}$. Since $\one_{B ( a_{0} )} \prec \chi_{1}$, Theorem \ref{a13} together with the previous estimate gives
\begin{equation} \label{a8}
\big\Vert \varphi ( P - z )^{- 1} \varphi \big\Vert \lesssim \frac{h}{\vert \im z \vert} e^{C \vert \im z \vert/ h} \big( \big\Vert \one_{\CC_{a,b}} ( P - z )^{- 1} \one_{\CC_{a,b}} \big\Vert + h^{-1} \big) .
\end{equation}
To conclude the proof of Theorem \ref{a1}, it is enough to apply Proposition \ref{a10}.

\section{Proof of Proposition \ref{a10}} \label{s4}

To prove this result, we construct a quasimode of order $h$. Since the semiclassical principal symbol $q (x , \xi )$ of $Q$ converges to $\xi^{2}$ at infinity, there exists $a_{0} > R_{0}$ such that, for all $\vert x \vert \geq a_{0}$, we have $q ( x , 0 ) \leq E_{0} / 2$. Let now $\varphi \in C^{\infty}_{0} ( \R^{n} )$ and $\vert x_{0} \vert \geq a_{0}$ be such that $\varphi ( x_{0} ) \neq 0$.

Using $q ( x_{0} , 0 ) < E_{0}$ and the form of $q ( x_{0} , \cdot )$ given in \eqref{d2}, one can construct $\xi_{0} ( \lambda ) \in C^{\infty}$ such that $q ( x_{0} , \xi_{0} ( \lambda ) ) = \lambda$ and $\partial_{\xi_{1}} q ( x_{0} , \xi_{0} ( \lambda ) ) \neq 0$ for all $\lambda \in [ E_{0} , E_{1} ]$. Solving the Hamilton--Jacobi equation by the usual method (see e.g. Dimassi and Sj\"{o}strand \cite[Theorem 1.5]{DiSj99_01}), there exists a phase function $\psi ( x , \lambda ) \in C^{\infty}$ defined for $x$ in a neighborhood of $x_{0}$ and for $\lambda \in [ E_{0} , E_{1} ]$, and such that
\begin{equation*}
q ( x , \nabla_{x} \psi ( x , \lambda ) ) = \lambda ,
\end{equation*}
for all $\lambda \in [ E_{0} , E_{1} ]$. Let now
\begin{equation*}
u ( x , z ) = \chi ( x ) e^{i \psi (x , \re z )/ h} ,
\end{equation*}
where $0 \neq \chi \in C^{\infty}_{0} ( \R^{n} )$ is supported in the intersection of $W = \{ x ; \ \vert \varphi ( x ) \vert \geq \vert \varphi ( x_{0} ) \vert / 2 \}$ and the set where $\psi$ is defined.

Let $P_{\theta}$ be the operator $P$ distorted outside the support of $\varphi$ by a fixed angle $0 < \theta \leq \theta_{0}$ large enough. A standard computation by the method of stationary phase gives
\begin{align}
( P_{\theta} - z ) u &= ( P - z ) u = ( Q - z ) u   \nonumber  \\
&= ( \op ( q ) - \re z ) u + h Q_{1} u - i \im z \, u    \nonumber  \\
&= \big( q ( x , \nabla_{x} \psi ( x , \re z ) ) - \re z \big) u + \CO ( h + \vert \im z \vert ) \nonumber  \\
&= \CO ( h + \vert \im z \vert ) .    \label{a20}
\end{align}
where $\op ( q )$ is any semiclassical quantization of $q$ and $Q_{1}$ is a $h$-differential operator of order two with coefficients uniformly bounded with respect to $h$. Using that $( P_{\theta} - z ) u = ( Q - z ) u$ is supported in $W$, we so can write
\begin{equation*}
( P_{\theta} - z ) u = \varphi v ,
\end{equation*}
where
\begin{equation} \label{d3}
\Vert v ( x , z ) \Vert \lesssim h + \vert \im z \vert .
\end{equation}
Then, using that $(P_{\theta} - z)^{-1}$ is invertible and the equality \eqref{a34}, we get
\begin{equation*}
\varphi u = \varphi ( P_{\theta} - z )^{-1} \varphi v = \varphi ( P - z )^{-1} \varphi v .
\end{equation*}
Finally, combining the previous equation with \eqref{d3} and $\Vert \varphi u \Vert \gtrsim 1$, we obtain
\begin{equation*}
\big\Vert \varphi ( P - z )^{-1} \varphi \big\Vert \gtrsim \frac{1}{h + \vert \im z \vert} \geq h^{- 1} e^{- \vert \im z \vert / h} ,
\end{equation*}
and the proposition follows.

\section{Estimates for the resonant states} \label{s5}

In this part, we prove the estimates for the (generalized) resonant states given is Section \ref{s1}.

\begin{proof}[Proof of Theorem \ref{a17}]
Choose cut-off functions $\chi , \widetilde{\chi} \in C^{\infty}_{0} ( \R^{n} )$ so that
\begin{equation} \label{a30}
\one_{B ( a )} \prec \chi \prec \one_{B ( b )} \qquad \text{and} \qquad \nabla \chi \prec \widetilde{\chi} \prec \one_{\CC_{a,b}}.
\end{equation}
Let $u$ be a resonant state associated to a resonance $z \in [ E_{0} , E_{1} ] - i [ 0 , M h \vert \ln h \vert ]$. We first estimate $\chi u$. Since $u \in \CD_{\rm loc}$ and $( P - z ) u = 0$, we have
\begin{align}
0 &= \< \chi ( P - z ) u , \chi u \> - \< \chi u , \chi ( P - z ) u \>   \nonumber \\
&= \big\< [ \chi^{2} , P ] u , u \big\> - 2 i \im z \Vert \chi u \Vert^{2} .    \label{a19}
\end{align}
Thus we obtain
\begin{equation} \label{a25}
\Vert \chi u \Vert^2 \leq \frac{1}{2 \vert \im z \vert} \big\vert \big\< [ \chi^{2} , P ] \widetilde{\chi} u , \widetilde{\chi} u \big\> \big\vert .
\end{equation}
To estimate the action of $[ \chi^{2} , P ]$ on $\widetilde{\chi} u$, we write
\begin{align}
[ \chi^{2} , P ] \widetilde{\chi} u &= [ \chi^{2} , P ] ( P + i )^{-1} ( P + i ) \widetilde{\chi} u  \nonumber \\
&= [ \chi^{2} , P ] ( P + i )^{-1} \big( \widetilde{\chi} ( P + i ) u + [ P , \widetilde{\chi} ] u \big)     \nonumber \\
&= [ \chi^{2} , P ] ( P + i )^{-1} \big( \widetilde{\chi} ( z + i ) \one_{\CC_{a,b}} u + [ P , \widetilde{\chi} ] \one_{\CC_{a,b}} u \big) .  \label{a26}
\end{align}
The operator $[ \chi^{2} , P ] ( P + i )^{-1} \widetilde{\chi} : L^2 \rightarrow L^2$ is bounded by $\CO (h)$, while the operator
\begin{equation*}
[ \chi^{2} , P ] ( P + i )^{-1} [ P , \widetilde{\chi} ] : L^2 \rightarrow L^2 ,
\end{equation*}
is bounded by $\CO ( h^{2} )$. Thus, combining \eqref{a25} and \eqref{a26}, we deduce
\begin{equation} \label{a24}
\Vert \chi u \Vert \leq C \sqrt{\frac{h}{\vert \im z \vert}} \big\Vert \one_{\CC_{a,b}} u \big\Vert .
\end{equation}

We now estimate $\varphi u$ for all $\varphi \in C^{\infty}_{0} ( \R^{n} )$. Let $P_{\theta}$ (resp. $R_{a , \theta}$) be a complex distortion of $P$ (resp. of $R_{a}$ which is defined in Section \ref{s2}) by a fixed angle $0 < \theta \leq \theta_{0}$. We also assume that the scaling occurs only outside of $\supp \varphi \cup B ( b )$. Then, from Lemma \ref{a39}, there exists $u_{\theta} \in \CD$ such that $( P_{\theta} - z ) u_{\theta} = 0$,
\begin{equation} \label{a32}
\one_{B ( b )} u_{\theta} = \one_{B ( b )} u \qquad \text{and} \qquad \varphi u_{\theta} = \varphi u .
\end{equation}
On the other hand, the definition of $R_{a}$ and $\one_{B ( a )} \prec \chi$ imply $R_{a , \theta} ( 1 - \chi ) = P_{\theta} ( 1 - \chi )$. Thus, we can write
\begin{equation*}
( R_{a , \theta} - z ) ( 1 - \chi ) u_{\theta} = ( P_{\theta} - z ) ( 1 - \chi ) u_{\theta} = - [ P , \chi ] u_{\theta} .
\end{equation*}
This yields
\begin{equation*}
( 1 - \chi ) u_{\theta} = - ( 1 - \widehat{\chi} ) ( R_{a , \theta} - z )^{-1} [ P, \chi ] u_{\theta} ,
\end{equation*}
where $\widehat{\chi} \in C^{\infty}_{0} ( \R^{n} )$, with $\one_{B ( R_{0} )} \prec \widehat{\chi} \prec \chi$, is an artificial cut-off function used to identify $( 1 - \widehat{\chi} ) \CH$ and $( 1 - \widehat{\chi} ) L^{2}$. Finally, we get
\begin{align}
\varphi u &= \varphi u_{\theta} = \varphi \chi u_{\theta} - \varphi ( 1 - \widehat{\chi} ) ( R_{a , \theta} - z )^{-1} [ P , \chi ] u_{\theta}  \nonumber \\
&= \varphi \chi u - ( 1 - \widehat{\chi} ) \varphi ( R_{a} - z )^{-1} \widetilde{\chi} [ P , \chi ] \one_{\CC_{a,b}} u .   \label{a31}
\end{align}
To complete the proof of Theorem \ref{a17}, it is enough to use \eqref{a24} and
\begin{align*}
\big\Vert ( 1 - \widehat{\chi} ) \varphi ( R_{a} - z )^{-1} \widetilde{\chi} [ P , \chi ] \big\Vert_{\CH \to \CH} &\lesssim \big\Vert \varphi ( R_{a} - z )^{-1} \widetilde{\chi} \big\Vert_{H_{h}^{-1} \to L^{2}} \big\Vert [ P , \chi ] \big\Vert_{L^{2} \to H_{h}^{-1}}   \\
&\lesssim \frac{e^{C \vert \im z \vert / h}}{h} \times h \leq C \sqrt{\frac{h}{\vert \im z \vert}} e^{C \vert \im z \vert / h} ,
\end{align*}
which follows from Lemma \ref{a9}.
\end{proof}

\begin{proof}[Proof of Proposition \ref{a27}]
We will prove this result by induction over the order $J$ of the generalized resonant state $u$. For $J = 1$, Proposition \ref{a27} is a direct consequence of Theorem \ref{a17}. Now assume that Proposition \ref{a27} holds true for generalized resonant states of order less than $J - 1$ for some $J \geq 2$. Let $u$ be a generalized resonant state of order $J$. Following the analysis of \eqref{a19}, we have
\begin{equation*}
\< \chi ( P - z ) u , \chi u \> - \< \chi u , \chi ( P - z ) u \> = \big\< [ \chi^{2} , P ] u , u \big\> - 2 i \im z \Vert \chi u \Vert^{2} ,
\end{equation*}
which implies
\begin{align}
\Vert \chi u \Vert^{2} &\leq \frac{1}{2 \vert \im z \vert} \big\vert \big\< [ \chi^{2} , P ] u , u \big\> \big\vert + \frac{1}{\vert \im z \vert} \Vert \chi u \Vert \big\Vert \chi ( P - z ) u \big\Vert    \nonumber \\
&\leq \frac{1}{2 \vert \im z \vert} \big\vert \big\< [ \chi^{2} , P ] u , u \big\> \big\vert + \frac{1}{2} \Vert \chi u \Vert^{2} + \frac{1}{2 \vert \im z \vert^{2}} \big\Vert \chi ( P - z ) u \big\Vert^{2}    \nonumber \\
&\leq \frac{1}{\vert \im z \vert} \big\vert \big\< [ \chi^{2} , P ] \widetilde{\chi} u , \widetilde{\chi} u \big\> \big\vert + \frac{1}{\vert \im z \vert^{2}} \big\Vert \chi ( P - z ) u \big\Vert^{2} .    \label{a28}
\end{align}
As in \eqref{a26}, we can write
\begin{align*}
[ \chi^{2} , P ] \widetilde{\chi} u ={}& [ \chi^{2} , P ] ( P + i )^{-1} ( P + i ) \widetilde{\chi} u   \\
={}& [ \chi^{2} , P ] ( P + i )^{-1} \widetilde{\chi} ( P - z ) u    \\
&+ [ \chi^{2} , P ] ( P + i )^{-1} \big( \widetilde{\chi} ( z + i ) \one_{\CC_{a,b}} u + [ P , \widetilde{\chi} ] \one_{\CC_{a,b}} u \big) ,
\end{align*}
which yields
\begin{equation*}
\big\Vert [ \chi^{2} , P ] \widetilde{\chi} u \big\Vert \lesssim h \big\Vert \one_{\CC_{a,b}} ( P - z ) u \big\Vert + h \big\Vert \one_{\CC_{a,b}} u \big\Vert .
\end{equation*}
Then, \eqref{a28} becomes
\begin{equation*}
\Vert \chi u \Vert \lesssim \sqrt{\frac{h}{\vert \im z \vert}} \big\Vert \one_{\CC_{a,b}} u \big\Vert + \sqrt{\frac{h}{\vert \im z \vert}} \big\Vert \one_{\CC_{a,b}} ( P - z ) u \big\Vert + \frac{1}{\vert \im z \vert} \big\Vert \chi ( P - z ) u \big\Vert .
\end{equation*}
Now we remark that $( P - z ) u \in \Pi_{1}$ is a generalized resonant state whose order is $J - 1$. Then, applying the recurrence assumption, the previous equation gives
\begin{equation} \label{a29}
\Vert \chi u \Vert \lesssim \sqrt{\frac{h}{\vert \im z \vert}} e^{C \vert \im z \vert / h} \sum_{j = 0}^{J - 1} \frac{1}{\vert \im z \vert^{j}} \big\Vert \one_{\CC_{a,b}} ( P - z )^{j} u \big\Vert .
\end{equation}

Next will now obtain a formula similar to \eqref{a31} to control $\varphi u$ for $\varphi \in C^{\infty}_{0} ( \R^{n} )$. As in \eqref{a32}, let $P_{\theta}$ (resp. $R_{a , \theta}$) be a complex distortion of $P$ (resp. $R_{a}$) by a fixed angle $0 < \theta \leq \theta_{0}$. Assume also that the scaling occurs only outside of $\supp \varphi \cup B ( b )$. Then, from Lemma \ref{a39}, there exists $u_{\theta} \in \CD^{J}$ such that $( P_{\theta} - z )^{J} u_{\theta} = 0$,
\begin{equation*}
\one_{B ( b )} u_{\theta} = \one_{B ( b )} u \qquad \text{and} \qquad \varphi u_{\theta} = \varphi u .
\end{equation*}
We also have $R_{a , \theta} ( 1 - \chi ) = P_{\theta} ( 1 - \chi )$. A direct computation gives
\begin{equation*}
( R_{a , \theta} - z )^{J} ( 1 - \chi ) u_{\theta} = ( P_{\theta} - z )^{J} ( 1 - \chi ) u_{\theta} = - \sum_{j = 0}^{J - 1} \big( \ad_{P}^{J - j} \chi \big) ( P - z )^{j} u ,
\end{equation*}
where $\ad_{P}^{0} \chi = \chi$ and $\ad_{P}^{j+1} \chi = [ P , \ad_{P}^{j} \chi ]$. Thus, mimicking the proof of \eqref{a31}, we get
\begin{equation*}
\varphi u = \varphi \chi u - ( 1 - \widehat{\chi} ) \varphi ( R_{a} - z )^{- J} \widetilde{\chi} \sum_{j = 0}^{J - 1} \big( \ad_{P}^{J - j} \chi \big) \one_{\CC_{a,b}} ( P - z )^{j} u .
\end{equation*}
Using \eqref{a29}, Lemma \ref{a9} and $\Vert \ad_{P}^{j} \chi \Vert_{H^{s}_{h} \to H^{s - j}_{h}} = \CO ( h^{j} )$, the previous equation gives
\begin{align*}
\Vert \varphi u \Vert &\lesssim \sqrt{\frac{h}{\vert \im z \vert}} e^{C \vert \im z \vert / h} \sum_{j = 0}^{J - 1} \frac{1}{\vert \im z \vert^{j}} \big\Vert \one_{\CC_{a,b}} ( P - z )^{j} u \big\Vert + e^{C \vert \im z \vert / h} \sum_{j = 0}^{J - 1} \frac{1}{h^{j}} \big\Vert \one_{\CC_{a,b}} ( P - z )^{j} u \big\Vert   \\
&\lesssim \sqrt{\frac{h}{\vert \im z \vert}} e^{C \vert \im z \vert / h} \sum_{j = 0}^{J - 1} \frac{1}{\vert \im z \vert^{j}} \big\Vert \one_{\CC_{a,b}} ( P - z )^{j} u \big\Vert ,
\end{align*}
since $h^{-1} \lesssim \vert \im z \vert^{- 1} e^{\vert \im z \vert / h}$. Thus Proposition \ref{a27} holds for generalized resonant states of order $J$ and the proof is complete.
\end{proof}

\section{Scattering by obstacles} \label{s6}

Let $K \subset \{ x \in \R^n ; \ \vert x \vert \leq R_{0} \}$, $n \geq 2$, be a bounded domain with smooth boundary such that $\Omega = \R^n \setminus \overline{K}$ is connected. Let $- \Delta_D$ be the Dirichlet Laplacian in $\Omega$ which is a self-adjoint operator on $\CH = L^{2}( \Omega )$ with domain $\CD = H_{0}^{1} ( \Omega ) \cap H^{2} ( \Omega )$. For $\im \lambda > 0$ the resolvent $( - \Delta_D - \lambda^2)^{-1}$ is a bounded operator from $\CH$ to $\CD$ and, for all $\varphi \in C_{0}^{\infty} ( \Omega )$, the cut-off resolvent $\varphi ( - \Delta_D - \lambda^2)^{-1} \varphi$ admits a meromorphic continuation in $\C$ for $n$ odd and in $\C \setminus i \R^{-}$ for $n$ even. For non-trapping perturbations, we have an estimate
\begin{equation*}
\big\Vert \varphi ( - \Delta_{D} - \lambda^{2} )^{- 1} \varphi \big\Vert \lesssim \< \lambda \>^{- 1} ,
\end{equation*}
for $\lambda \in \R$, $\vert \lambda \vert \geq 1$, while for trapping perturbations and $\lambda \in \R$, $\vert \lambda \vert \geq 1$ this cut-off resolvent is bounded by $e^{C \vert \lambda \vert}$ (see Burq \cite{Bu98_01}).

Since we will use the Lax--Phillips theory \cite{LaPh89_01}, we consider in $\Omega$ the wave equation
\begin{equation} \label{a43}
\partial_{t}^{2} u - \Delta_{D} u = 0 ,
\end{equation}
with Dirichlet boundary condition on $\partial \Omega$. Let $H_D ( \Omega )$ be the closure of $C^{\infty}_{0} ( \Omega )$ for the norm $\Vert \nabla \cdot \Vert_{L^{2} ( \Omega )}$. We introduce the energy space $H = H_D ( \Omega ) \oplus L^2 ( \Omega )$ and the unitary group $e^{-i t G} : H \longrightarrow H$ with generator $- i G$, where
\begin{equation*}
G = i \left( \begin{matrix} 0 & I \\ \Delta_{D} & 0 \end{matrix} \right) ,
\end{equation*}
is a self-adjoint operator on $H$ (see Lax and Phillips \cite{LaPh89_01}). As usual, the solutions of \eqref{a43} are given by
\begin{equation}
\left( \begin{matrix} u (t) \\ \partial_{t} u (t) \end{matrix} \right) = e^{-i t G} \left( \begin{matrix} u (0) \\ \partial_{t} u (0) \end{matrix} \right) .
\end{equation}

To apply the results for semiclassical operators established in Section \ref{s1}, we consider the scaling $\lambda = \frac{\sqrt{z}}{h}$ and write
\begin{equation} \label{a45}
( - \Delta_{D} - \lambda^{2} )^{- 1} = h^{2} ( P - z )^{- 1} ,
\end{equation}
where $P = - h^{2} \Delta_{D}$ satisfies the general assumptions of Section \ref{s1}. We want to estimate the cut-off resolvent of $- \Delta_{D}$ in the region
\begin{equation*}
\CS = \big\{ \lambda \in \C ; \ \re \lambda \geq 1 \text{ and } 0 \geq \im \lambda \geq - M \ln ( \re \lambda ) \big\} .
\end{equation*}
It is then enough to consider the situation
\begin{equation*}
\lambda \in \CS_{h} = \big[ h^{-1} , 2 h^{-1} \big] - i \big[ 0 , M ( \vert \ln h \vert + \ln 2 ) \big] ,
\end{equation*}
since the union of $\CS_{h}$ over $0 < h \leq 1$ covers $\CS$. For $\lambda \in \CS_{h}$, we have
\begin{equation*}
\sqrt{z} \in [ 1 , 2 ] - i \big[ 0 , h M ( \vert \ln h \vert + \ln 2 ) \big] ,
\end{equation*}
and finally
\begin{equation*}
z \in [1/2 , 4 ] - i [ 0 , 5 M h \vert \ln h \vert ] ,
\end{equation*}
for $h$ small enough. Applying Theorem \ref{a1} in this region to the operator $P$ and using the relation \eqref{a45}, we obtain, for $\lambda \in \CS_{h}$ with $h$ small enough,
\begin{align*}
\big\Vert \varphi ( - \Delta_{D} - \lambda^{2} )^{- 1} \varphi \big\Vert &= \big\Vert h^{2} \varphi ( P - z )^{- 1} \varphi \big\Vert  \\
&\leq C \frac{h}{\vert \im z \vert} e^{C \vert \im z \vert / h} \big\Vert h^{2} \one_{\CC_{a,b}} ( P - z )^{-1} \one_{\CC_{a,b}} \big\Vert   \\
&\leq C \frac{e^{C \vert \im \lambda \vert}}{\vert \im \lambda \vert} \big\Vert \one_{\CC_{a,b}} ( - \Delta_{D} - \lambda^{2} )^{- 1} \one_{\CC_{a,b}} \big\Vert ,
\end{align*}
since $\vert \im z \vert /h$ behaves like $\vert \im \lambda \vert$ in $\CS_{h}$. Note also that such relation holds true in any compact set (with a constant $C$ depending on the compact set). This follows from Corollary \ref{a37} near the resonances and from the fact that $\one_{\CC_{a,b}} ( - \Delta_{D} - \lambda^{2} )^{- 1} \one_{\CC_{a,b}}$ does not vanish (because $\chi = ( - \Delta_{D} - \lambda^{2} )^{- 1} ( - \Delta_{D} - \lambda^{2} ) \chi$) away from the resonances. Summing up, we have proved the following

\begin{theorem}\sl
There exists $a_{0} > R_{0}$ such that, for all $a_{0} < a < b$, $M > 0$ and $\varphi \in C^{\infty}_{0} ( \R^{n} )$, there exists $C > 0$ such that
\begin{equation} \label{a44}
\big\Vert \varphi ( - \Delta_{D} - \lambda^{2} )^{- 1} \varphi \big\Vert \leq C \frac{e^{C \vert \im \lambda \vert}}{\vert \im \lambda \vert} \big\Vert \one_{\CC_{a,b}} ( - \Delta_{D} - \lambda^{2} )^{- 1} \one_{\CC_{a,b}} \big\Vert ,
\end{equation}
for $\lambda$ not resonance with $\re \lambda \geq 1$ and $0 \geq \im \lambda \geq - M \ln ( \re \lambda )$.
\end{theorem}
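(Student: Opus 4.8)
The plan is to read off \eqref{a44} from Theorem \ref{a1} applied to the semiclassical operator $P = - h^{2} \Delta_{D}$, using the rescaling $\lambda = \sqrt{z}/h$ of \eqref{a45}; essentially all the work is the change of variables together with a patching argument for bounded $\re \lambda$, so the proof is short.

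First I would verify that $P = - h^{2} \Delta_{D}$ fits the black box framework of Section \ref{s1}: one takes $\CH = L^{2} ( \Omega )$, $\CH_{R_{0}} = L^{2} ( \Omega \cap B ( R_{0} ) )$, $\CD = H_{0}^{1} ( \Omega ) \cap H^{2} ( \Omega )$, and $Q = - h^{2} \Delta$ on $\R^{n} \setminus B ( R_{0} )$, so that \eqref{a3}--\eqref{a50} hold trivially with constant coefficients, the holomorphic extension to $\Upsilon$ is automatic, and $\one_{B ( R_{0} )} ( P + i )^{- 1}$ is compact by Rellich's theorem. Consequently the resonances of $P$ in $\Lambda_{\theta_{0}}$ correspond exactly to the scattering poles of $- \Delta_{D}$ via $\lambda \mapsto z = h^{2} \lambda^{2}$, and \eqref{a45} is an identity between the two meromorphic continuations.

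Next, given $\lambda$ with $\re \lambda \geq 1$ and $0 \geq \im \lambda \geq - M \ln ( \re \lambda )$, I would locate $\lambda$ in one of the dyadic boxes $\CS_{h} = [ h^{- 1} , 2 h^{- 1} ] - i [ 0 , M ( \vert \ln h \vert + \ln 2 ) ]$, whose union over $0 < h \leq 1$ exhausts $\CS$; for $\lambda \in \CS_{h}$ the parameter $z = h^{2} \lambda^{2}$ satisfies $\sqrt{z} = h \lambda \in [ 1 , 2 ] - i [ 0 , h M ( \vert \ln h \vert + \ln 2 ) ]$, hence $z \in [ 1 / 2 , 4 ] - i [ 0 , 5 M h \vert \ln h \vert ]$ for $h$ small. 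This is exactly the range to which Theorem \ref{a1} applies, with $[ E_{0} , E_{1} ] = [ 1 / 2 , 4 ]$ and $M$ replaced by $5 M$. Applying it and cancelling the $h^{2}$ factors coming from \eqref{a45}, the prefactor $\frac{h}{\vert \im z \vert}$ becomes comparable to $\vert \im \lambda \vert^{- 1}$ and the exponential $e^{C \vert \im z \vert / h}$ becomes $e^{C' \vert \im \lambda \vert}$, since $\vert \im z \vert = \vert \im ( h^{2} \lambda^{2} ) \vert$ is comparable to $h \vert \im \lambda \vert$ on $\CS_{h}$; this gives \eqref{a44} for all $\lambda$ with $\re \lambda$ large, with a constant $C$ independent of $h$.

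It then remains to treat $\lambda$ in a bounded window $1 \leq \re \lambda \leq R$, and this is the only point needing a separate argument. Away from the scattering poles both members of \eqref{a44} are continuous, and the right-hand side is bounded away from zero because $\one_{\CC_{a,b}} = ( - \Delta_{D} - \lambda^{2} )^{- 1} ( - \Delta_{D} - \lambda^{2} ) \one_{\CC_{a,b}}$ forces $\one_{\CC_{a,b}} ( - \Delta_{D} - \lambda^{2} )^{- 1} \one_{\CC_{a,b}} \neq 0$; near a pole one uses the Laurent expansion \eqref{a51} together with the statement controlling cut-off resolvents near resonances (Corollary \ref{a37}) to see that the leading singular parts of the two sides have matching behaviour, so the quotient stays bounded through the pole. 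By compactness of $\{ 1 \leq \re \lambda \leq R \}$ one obtains \eqref{a44} there, and combining with the large-$\re \lambda$ estimate and enlarging $C$ finishes the proof. The one subtlety worth flagging is uniformity: Theorem \ref{a1} provides a constant that is uniform over a fixed semiclassical box but a priori depends on it, and the dyadic covering by the $\CS_{h}$ is precisely the device that produces a single constant valid on the unbounded region $\CS$.
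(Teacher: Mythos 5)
Your proposal follows essentially the same route as the paper: rescale via $\lambda=\sqrt{z}/h$ so that $(-\Delta_D-\lambda^2)^{-1}=h^2(P-z)^{-1}$ with $P=-h^2\Delta_D$ in the black-box framework, cover the region $\CS$ by the boxes $\CS_h=[h^{-1},2h^{-1}]-i[0,M(\vert\ln h\vert+\ln 2)]$ so that $z\in[1/2,4]-i[0,5Mh\vert\ln h\vert]$, apply Theorem \ref{a1} there with $h/\vert\im z\vert\simeq\vert\im\lambda\vert^{-1}$ and $\vert\im z\vert/h\simeq\vert\im\lambda\vert$, and treat the remaining compact range of $\re\lambda$ exactly as the paper does, via Corollary \ref{a37} near the resonances and the nonvanishing of $\one_{\CC_{a,b}}(-\Delta_D-\lambda^2)^{-1}\one_{\CC_{a,b}}$ away from them. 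The argument and its uniformity bookkeeping match the paper's proof, so no further comment is needed.
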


For $n \geq 3$, $n$ odd, there is a link between the cut-off resolvent $\varphi (-\Delta_D - \lambda^2)^{-1} \varphi$ and
the contraction semigroup $Z^{\rho} (t) = P_{+}^{\rho} e^{-i t G} P_{-}^{\rho} = e^{t B^{\rho}} : H \longrightarrow H$, $t \geq 0$, with generator $B^{\rho}$, introduced by Lax and Phillips \cite{LaPh89_01}. Here, $P_{\pm}^{\rho}$ are the orthogonal projections on the orthogonal complements of the Lax--Phillips spaces $D_{\pm}^{\rho}$, ${\rho} > R_{0}$.
The spectrum of $i B^{\rho}$ coincides with the resonances and is then independent on the choice of $\rho > R_{0}$. Given $\varphi \in C_0^{\infty} ( \Omega )$, we may fix $\rho > R_{0}$ so that $\varphi P_{\pm}^{\rho} = \varphi = P_{\pm}^{\rho} \varphi$. In the sequel, we drop the indexes $\rho$ in the notations and write $B , P_{\pm}$ instead of $B^{\rho} , P_{\pm}^{\rho}$. For $\im \lambda > 0$, we have
\begin{equation*}
- \varphi ( B + i \lambda )^{- 1} \varphi = \int_0^{\infty} e^{i \lambda t} \varphi P_{+} e^{- i t G} P_{-} \varphi d t = - i \varphi ( G - \lambda )^{-1} \varphi ,
\end{equation*}
and, by analytic continuation, this equality holds true for $\lambda$ not resonance with $\im \lambda \leq 0$. Moreover, one can see that
\begin{equation*}
\big\Vert \varphi ( G - \lambda )^{- 1} \varphi \big\Vert_{H \to H} \leq C \big\Vert \varphi \lambda ( - \Delta_D - \lambda^2 )^{- 1} \varphi \big\Vert_{L^2 ( \Omega ) \to L^2 ( \Omega )} ,
\end{equation*}
for $\lambda$ not resonance with $\vert \lambda \vert \geq 1$. Thus \eqref{a44} implies
\begin{equation} \label{a46}
\big\Vert \varphi ( B + i \lambda )^{- 1} \varphi \big\Vert_{H \to H} \leq C \frac{\vert \lambda \vert}{\vert \im \lambda \vert} e^{C\vert \im \lambda \vert} \big\Vert \one_{\CC_{a,b}} ( - \Delta_D - \lambda^2 )^{- 1} \one_{\CC_{a,b}} \big\Vert_{L^2 ( \Omega ) \to L^2 ( \Omega )} .
\end{equation}

Note that, in odd dimension $n \geq 3$, it is possible to estimate the cut-off resolvent in term of scattering quantities. This was done by Stoyanov and the second author in \cite{PeSt09_01} using the Lax--Phillips theory. More precisely, consider the scattering matrix $S ( \lambda ) = I + K ( \lambda ): L^2 ( \S^{n - 1} ) \longrightarrow L^2 ( \S^{n - 1} )$, associated to the Dirichlet problem for the wave equation in $\Omega$ given in \eqref{a43}. This operator is defined for $\im \lambda \geq 0$ and it is unitary for $\lambda \in \R$. The operator $K ( \lambda )$ is a Hilbert--Schmidt operator with kernel $a ( \lambda , \omega , \omega^{\prime} )$, called scattering amplitude. The scattering matrix $S ( \lambda )$ (as the scattering amplitude $a ( \lambda , \omega , \omega^{\prime} )$) has a meromorphic continuation from $\im \lambda \geq 0$ to the half plane $\im \lambda < 0$ and the poles coincide with the resonances. Of course, the form of the scattering operator $S ( \lambda )$ depends on the outgoing and incoming representations of the energy space $H$ (see \cite{LaPh89_01}) and to have the formula \eqref{a52} for the scattering amplitude we must have an appropriate outgoing/incoming representation.

By using the link between $\Vert ( B + i \lambda )^{- 1} \Vert_{H \to H}$ and the inner representation of the scattering operator $S_1(\lambda)$ established in \cite[Chapter IV]{LaPh89_01}, it is proved in \cite[Section 4]{PeSt09_01} that
\begin{equation}
\Vert ( B + i \lambda )^{- 1} \Vert_{H \to H} \leq \frac{3}{2} \frac{e^{\beta \vert \im \lambda \vert}}{\vert \im \lambda \vert} \Vert S ( \lambda ) \Vert_{L^2 ( \S^{n - 1} ) \to L^2 ( \S^{n - 1} )} ,
\end{equation}
for some $\beta \geq 0$ given by the inner representation of the scattering operator. Using that the Hilbert--Schmidt norm of an operator is the $L^{2}$ norm of its kernel, the last estimate yields
\begin{equation} \label{a47}
\Vert ( B + i \lambda )^{- 1} \Vert_{H \to H} \leq \frac{3}{2} \frac{e^{\beta \vert \im \lambda \vert}}{\vert \im \lambda \vert} \bigg( \bigg( \int_{\S^{n-1} \times \S^{n-1}} \big\vert a ( \lambda , \omega^{\prime} , \omega ) \big\vert^{2} d \omega \, d \omega^{\prime} \bigg)^{1/2} + 1 \bigg) .
\end{equation}
Now, we can handle the integral over $\S^{n-1} \times \S^{n-1}$ using the representation \eqref{a52} with $h = 1$, $z = \lambda^2$ and $P = - \Delta_D$. Choosing the functions $\chi_{j} \in C_0^{\infty} ( \Omega )$, $j = 1 , 2$ so that $\nabla \chi_{j} \prec \one_{\CC_{a,b}}$, the formula \eqref{a52} and the estimate \eqref{a47} give an analog of \eqref{a46} with a possible polynomial loss in $\< \lambda \>$.

\appendix

\section{Properties of the generalized resonant states} \label{s7}

In this part, we collect some basic properties of the generalized resonant states. Being for the most part in the folklore of the theory of resonances, we only give them for a reason of completeness.

Let $z \in \Lambda_{\theta_{0}}$ be a resonance of $P$. Since $( P - \lambda )^{-1} : \CH_{\rm comp} \longrightarrow \CD_{\rm loc}$ is an operator-valued meromorphic function, we can write, for $\lambda$ in a neighborhood of $z$,
\begin{equation*}
( P - \lambda )^{-1} = \frac{\Pi_{N}}{(z - \lambda )^{N}} + \cdots + \frac{\Pi_{1}}{z - \lambda} + \CA ( \lambda ) ,
\end{equation*}
as operators from $\CH_{\rm comp}$ to $\CD_{\rm loc}$, where $\CA ( \lambda )$ is holomorphic near $z$ and the $\Pi_{j}$'s are finite rank operators. Let $P_{\theta}$ be a complex distortion by an angle $\arctan \big( \frac{\vert \im z \vert}{\vert \re z \vert} \big) < \theta \leq \theta_{0}$. Then, for $\lambda$ in a neighborhood of $z$, we have
\begin{equation*}
( P_{\theta} - \lambda )^{-1} = \frac{\Pi^{\theta}_{N_{\theta}}}{(z - \lambda )^{N_{\theta}}} + \cdots + \frac{\Pi^{\theta}_{1}}{z - \lambda} + \CA ( \lambda ) ,
\end{equation*}
as operators from $\CH$ to $\CD$, where $\CA ( \lambda )$ is holomorphic near $z$ and the $\Pi_{j}^{\theta}$'s are finite rank operators. Moreover, if the distortion occurs outside of the support of $\varphi \in C^{\infty}_{0}( \R^{n} )$, it follows from \eqref{a34} that
\begin{equation} \label{a35}
\varphi \Pi_{j} \varphi = \varphi \Pi_{j}^{\theta} \varphi ,
\end{equation}
for all $j \geq 1$.

\begin{lemma}\sl \label{a33}
Let $\varphi \in C^{\infty}_{0} ( \R^{n} )$ be such that $\one_{B ( R_{1} )} \prec \varphi$. Then, the multiplication by $\varphi$ is injective on $\im \Pi_{j}$ (resp. $\im \Pi^{\theta}_{j}$) for all $1 \leq j \leq N$ (resp. $1 \leq j \leq N_{\theta}$).
\end{lemma}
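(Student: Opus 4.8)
The plan is to show directly that if $u\in\im\Pi_j$ (respectively $u\in\im\Pi_j^\theta$) satisfies $\varphi u=0$, then $u=0$; injectivity on the (finite dimensional) space $\im\Pi_j$ then follows by linearity. First I would extract the two consequences of $\varphi u=0$. Since $\one_{B(R_1)}\prec\varphi$ and $R_1>R_0$, the cut-off $\varphi$ equals $1$ on a neighborhood $B(R_1+\varepsilon)$ of $\overline{B(R_1)}$; in particular its constant value near $B(R_0)$ (which exists by the black-box convention on cut-offs) is $1$, so the multiplication operator $\varphi$ restricts to the identity on $\CH_{R_0}$ and $\varphi u=0$ forces the $\CH_{R_0}$-component of $u$ to vanish. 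It then remains to treat the $L^2(\R^n\setminus B(R_0))$-component $u_{\rm ext}$, for which $\varphi u_{\rm ext}=0$ and hence $u_{\rm ext}\equiv0$ on the nonempty open set $\{R_0<\vert x\vert<R_1+\varepsilon\}$.

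Next I would use the equation satisfied by $u$. From $(P-z)\Pi_N=0$ and $(P-z)\Pi_k=\Pi_{k+1}$ one gets $(P-z)^N\Pi_j=0$, so $(P-z)^Nu=0$; on $\{\vert x\vert>R_0\}$ this becomes $(Q-z)^Nu_{\rm ext}=0$, where $Q$ is an elliptic second order differential operator with real principal symbol, and elliptic regularity makes $u_{\rm ext}$ smooth there. I would then run a downward induction on $k$: put $u_k=(Q-z)^ku_{\rm ext}$, so $u_N=0$ on $\{\vert x\vert>R_0\}$, each $u_k$ vanishes on $\{R_0<\vert x\vert<R_1+\varepsilon\}$ (differential operators are local), and if $u_{k+1}\equiv0$ then $u_k$ solves the second order elliptic equation $(Q-z)u_k=0$ and vanishes on an open set, so the weak unique continuation property for $Q-z$ forces $u_k\equiv0$ on the connected set $\{\vert x\vert>R_0\}$ (or on each of its two half-lines when $n=1$, both of which still meet the annulus). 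After $N$ steps $u_{\rm ext}\equiv0$ on $\{\vert x\vert>R_0\}$, and combined with the vanishing of the $\CH_{R_0}$-part this yields $u=0$.

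For the distorted projectors $\Pi_j^\theta$ the scheme is the same with $P$, $Q$, $N$ replaced by $P_\theta$, $Q_\theta$, $N_\theta$. The only modification is that $Q_\theta-z$ has a complex principal symbol where $\Gamma_\theta\neq\R^n$; there, instead of the real-principal-part unique continuation, I would invoke analytic hypoellipticity, using that the $a_\alpha$ extend holomorphically to $\Upsilon$ and that $\Gamma_\theta$ can be chosen real-analytic outside $B(R_1)$ (as in \cite[Section~3]{Sj01_01}): then $u_{\rm ext}$ is real-analytic on $\{\vert x\vert>R_1\}$ and vanishes there once it vanishes near $\vert x\vert=R_1$, while on $R_0<\vert x\vert\leq R_1$ one has $\Gamma_\theta=\R^n$ and $\varphi\equiv1$, so $u_{\rm ext}=0$ there already.

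The only genuinely analytic input, and the step I expect to be the main obstacle, is ensuring that the vanishing propagates all the way to infinity: one cannot feed the iterated operator $(Q-z)^N$ to a unique continuation theorem directly, which is why its factors must be peeled off one at a time, and one must split the exterior region into the part where the operator is the undistorted $Q$ (classical Aronszajn-type unique continuation) and the distorted part (handled via analyticity of the coefficients). The remaining points — elliptic regularity, locality of differential operators, and the identification of the $\CH_{R_0}$-component — are routine.
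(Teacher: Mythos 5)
Your treatment of the undistorted projections $\im \Pi_{j}$ is correct, and it takes a genuinely different route from the paper: you peel off the factors of $(Q-z)$ one at a time and invoke Aronszajn-type unique continuation for a second-order elliptic operator with smooth \emph{real} principal symbol, whereas the paper runs both cases through the holomorphic-extension lemma (Lemma 3.1 of \cite{SjZw91_01}) applied to outgoing/distorted solutions. For that half your argument is sound and arguably more elementary: the identification of the $\CH_{R_0}$-component (the cut-off acts as the constant $1$ there, by the paper's convention that cut-offs are constant near $B(R_0)$), the reduction to $(Q-z)^{N}u_{\rm ext}=0$ outside $B(R_0)$, elliptic regularity, locality, and the $n=1$ connectivity caveat are all handled correctly.

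The genuine gap is in the distorted case, exactly at the step you single out. A manifold $\Gamma_{\theta}$ that is real-analytic on the connected open set $\{ \vert x \vert > R_{1} \}$ and coincides with $e^{i\theta}\R^{n}$ outside a compact set must, by the identity theorem for real-analytic maps, coincide with $e^{i\theta}\R^{n}$ on all of $\{ \vert x \vert > R_{1} \}$, and then it cannot be glued to $\R^{n}$ along $B(R_{1})$; so the choice you invoke does not exist in the framework of \cite{Sj01_01}, where $\Gamma_{\theta}$ is only smooth. Consequently, in the interpolation zone where $\Gamma_{\theta}$ is neither $\R^{n}$ nor $e^{i\theta}\R^{n}$, the distorted operator is an elliptic operator with merely smooth, genuinely complex coefficients (non-real principal symbol), a class for which Morrey--Nirenberg analytic hypoellipticity does not apply and for which unique continuation from open sets can actually fail (Alinhac-type counterexamples); your vanishing therefore does not propagate across that region. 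The correct tool -- and the one the paper uses -- is Lemma 3.1 of \cite{SjZw91_01}: since the coefficients extend holomorphically to $\Upsilon$ and $\Gamma_{\theta}$ is maximally totally real, any solution of $(P_{\theta}-z)v=0$ outside $B(R_{1})$ is the restriction to $\Gamma_{\theta}$ of a function holomorphic in $\Upsilon$, so vanishing on the open set $\{ R_{1}<\vert x\vert<R_{1}+\varepsilon \}$ (where $\varphi=1$) propagates by holomorphic, not real-analytic, continuation; with that substitution your downward induction goes through and coincides with the paper's proof.
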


\begin{proof}
Let $u_{\theta} \in \im \Pi_{j}^{\theta}$ be such that $\varphi u_{\theta} = 0$. Using $( P_{\theta} - z ) \Pi^{\theta}_{N_{\theta}} = 0$ and $( P_{\theta} - z ) \Pi^{\theta}_{k} = \Pi^{\theta}_{k + 1}$, we get
\begin{equation*}
( P_{\theta} - z ) ( P_{\theta} - z )^{N_{\theta} - 1} u_{\theta} = ( P_{\theta} - z )^{N_{\theta}} u_{\theta} = 0 .
\end{equation*}
From Lemma 3.1 of Sj\"{o}strand and Zworski \cite{SjZw91_01}, we deduce that $( P_{\theta} - z )^{N_{\theta} - 1} u_{\theta}$ is (outside of $B ( R_{1} )$) the restriction to $\Gamma_{\theta}$ of a holomorphic function in $\Upsilon$. On the other hand, $( P_{\theta} - z )^{N_{\theta} - 1} u_{\theta} = 0$ on the support of $\varphi$ since $\varphi u_{\theta} = 0$. Therefore,
\begin{equation*}
( P_{\theta} - z ) ( P_{\theta} - z )^{N_{\theta} - 2} u_{\theta} = ( P_{\theta} - z )^{N_{\theta} - 1} u_{\theta} = 0 .
\end{equation*}
Then, performing an induction argument, we get $u_{\theta} = 0$. The fact that the multiplication by $\varphi$ is injective on $\im \Pi_{j}$ is similar.
\end{proof}

\begin{remark}\sl \label{a36}
Using ${( P - \lambda )^{- 1}}^{*} = ( P - \bar{\lambda} )^{-1}$ (resp. ${( P_{\theta} - \lambda )^{- 1}}^{*} = ( P_{- \theta} - \bar{\lambda} )^{-1}$), we can prove the same way that $\im \Pi_{j} \varphi = \im \Pi_{j}$ (resp. $\im \Pi^{\theta}_{j} \varphi = \im \Pi^{\theta}_{j}$).
\end{remark}

Combining \eqref{a35}, Lemma \ref{a33} and Remark \ref{a36}, we get

\begin{corollary}\sl \label{a37}
Let $\varphi \in C^{\infty}_{0} ( \R^{n} )$ be such that $\one_{B ( R_{1} )} \prec \varphi$ and such that the distortion occurs outside of the support of $\varphi$. Then, we have $N = N_{\theta}$ and, for all $1 \leq j \leq N$,
\begin{equation*}
\rank \Pi_{j} = \rank \varphi \Pi_{j} \varphi = \rank \varphi \Pi_{j}^{\theta} \varphi = \rank \Pi_{j}^{\theta} .
\end{equation*}
In particular,
\begin{equation} \label{a40}
\im \varphi \Pi_{j} = \im \varphi \Pi_{j} \varphi = \im \varphi \Pi_{j}^{\theta} \varphi = \im \varphi \Pi_{j}^{\theta} .
\end{equation}
\end{corollary}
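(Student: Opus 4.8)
The plan is to deduce the corollary from the three preceding facts by pure bookkeeping: the identity \eqref{a35}, the injectivity of multiplication by $\varphi$ on $\im \Pi_{j}$ and on $\im \Pi_{j}^{\theta}$ (Lemma \ref{a33}), and the equalities $\im \Pi_{j} \varphi = \im \Pi_{j}$ and $\im \Pi_{j}^{\theta} \varphi = \im \Pi_{j}^{\theta}$ of Remark \ref{a36}. It is convenient to set $\Pi_{j} = 0$ for $j > N$ and $\Pi_{j}^{\theta} = 0$ for $j > N_{\theta}$, so that \eqref{a35}, which itself comes from \eqref{a34} and the uniqueness of Laurent coefficients, holds in the form $\varphi \Pi_{j} \varphi = \varphi \Pi_{j}^{\theta} \varphi$ for every $j \geq 1$.

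First I would check that $\rank \varphi \Pi_{j} \varphi = \rank \Pi_{j}$ for all $j \geq 1$. By Remark \ref{a36} the operator $\Pi_{j} \varphi$ has range $\im ( \Pi_{j} \varphi ) = \im \Pi_{j}$, which is finite-dimensional, so $\rank ( \Pi_{j} \varphi ) = \rank \Pi_{j}$; composing on the left with $\varphi$, which by Lemma \ref{a33} is injective on $\im \Pi_{j}$ (the case $j > N$ being trivial), does not change the rank. The same argument applied to the distorted operator gives $\rank \varphi \Pi_{j}^{\theta} \varphi = \rank \Pi_{j}^{\theta}$. Feeding these two equalities into \eqref{a35} yields $\rank \Pi_{j} = \rank \Pi_{j}^{\theta}$ for all $j \geq 1$, hence $\Pi_{j}$ vanishes exactly when $\Pi_{j}^{\theta}$ does; comparing the largest indices for which these ranks are positive then gives $N = N_{\theta}$, and the full chain of rank identities in the statement follows.

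For the displayed conclusion \eqref{a40} I would use that left multiplication by $\varphi$ commutes with taking ranges, $\im ( \varphi A ) = \varphi ( \im A )$ --- this uses only that $\varphi$ has compact support, so that $\varphi \CD_{\rm loc} \subset \CH$ and the relevant images are genuine subspaces of $\CH$. Together with $\im ( \Pi_{j} \varphi ) = \im \Pi_{j}$ this gives $\im \varphi \Pi_{j} = \varphi ( \im \Pi_{j} ) = \varphi \big( \im ( \Pi_{j} \varphi ) \big) = \im ( \varphi \Pi_{j} \varphi )$, and symmetrically $\im \varphi \Pi_{j}^{\theta} = \im ( \varphi \Pi_{j}^{\theta} \varphi )$; the remaining equality $\im ( \varphi \Pi_{j} \varphi ) = \im ( \varphi \Pi_{j}^{\theta} \varphi )$ is once more \eqref{a35}. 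Since each step is a one-line consequence of results already established, there is no serious obstacle here; the only points deserving care are keeping the zero-padding convention for the Laurent coefficients consistent so that every identity is applied for all $j$ simultaneously, and observing that the rank-under-composition argument needs the injectivity of $\varphi$ precisely on $\im ( \Pi_{j} \varphi ) = \im \Pi_{j}$, which is exactly what Lemma \ref{a33} and Remark \ref{a36} supply.
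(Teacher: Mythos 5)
Your proposal is correct and follows essentially the same route as the paper, which obtains the corollary precisely by combining \eqref{a35}, Lemma \ref{a33} and Remark \ref{a36}; you have merely made the bookkeeping (zero-padding of the Laurent coefficients, rank preservation under the injective multiplication by $\varphi$, and $\im ( \varphi A ) = \varphi ( \im A )$) explicit. No gaps.
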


\begin{lemma}\sl \label{a38}
For all $1 \leq j \leq N$, we have $\im \Pi_{j} \subset \im \Pi_{1}$ and $\im \Pi_{j}^{\theta} \subset \im \Pi_{1}^{\theta}$.
\end{lemma}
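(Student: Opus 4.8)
The plan is to prove the inclusion first for the distorted operator $P_{\theta}$, where $\Pi_{1}^{\theta}$ is a genuine spectral projector, and then to transfer it to $P$ by means of Corollary \ref{a37}. Iterating the relations $( P_{\theta} - z ) \Pi_{j}^{\theta} = \Pi_{j + 1}^{\theta}$ for $1 \leq j \leq N_{\theta} - 1$, $( P_{\theta} - z ) \Pi_{N_{\theta}}^{\theta} = 0$, and the analogous identities for $P$, one obtains (recalling $N = N_{\theta}$ from Corollary \ref{a37})
\begin{equation*}
\im \Pi_{j}^{\theta} = ( P_{\theta} - z )^{j - 1} \big( \im \Pi_{1}^{\theta} \big) , \qquad \im \Pi_{j} = ( P - z )^{j - 1} \big( \im \Pi_{1} \big) , \qquad ( P - z )^{N} \big( \im \Pi_{1} \big) = \{ 0 \} .
\end{equation*}
Since the spectrum of $P_{\theta}$ is discrete in $\Lambda_{\theta}$, the resonance $z$ is an isolated eigenvalue of $P_{\theta}$, and, by the form of its Laurent expansion, $\Pi_{1}^{\theta}$ is the Riesz projector of $P_{\theta}$ associated to $z$; hence $\im \Pi_{1}^{\theta}$ is the finite-dimensional generalized eigenspace of $P_{\theta}$ at $z$, which is invariant under $P_{\theta}$. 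Therefore $( P_{\theta} - z )^{j - 1} ( \im \Pi_{1}^{\theta} ) \subseteq \im \Pi_{1}^{\theta}$, i.e. $\im \Pi_{j}^{\theta} \subseteq \im \Pi_{1}^{\theta}$, which is the distorted half of the lemma.

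For $P$ itself, I would fix $\varphi \in C^{\infty}_{0} ( \R^{n} )$ with $\one_{B ( R_{1} )} \prec \varphi$ and such that the distortion occurs outside $\supp \varphi$. Applying the multiplication by $\varphi$ to the inclusion just obtained and using \eqref{a40} twice, one finds $\im \varphi \Pi_{j} = \im \varphi \Pi_{j}^{\theta} \subseteq \im \varphi \Pi_{1}^{\theta} = \im \varphi \Pi_{1}$. Now let $u \in \im \Pi_{j}$. Then $\varphi u \in \im \varphi \Pi_{j} \subseteq \im \varphi \Pi_{1}$, so there is $u_{1} \in \im \Pi_{1}$ with $\varphi u_{1} = \varphi u$. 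The element $w = u - u_{1}$ is outgoing, satisfies $( P - z )^{N} w = 0$, and vanishes near $\overline{B ( R_{1} )}$ because $\varphi w = 0$. The induction of the proof of Lemma \ref{a33} then applies verbatim to $w$: at each step $( P - z )^{k} w$ agrees, outside $B ( R_{1} )$, with the restriction of a function holomorphic in $\Upsilon$ and vanishes near $\overline{B ( R_{1} )}$, hence vanishes identically; letting $k$ decrease from $N - 1$ to $0$ yields $w = 0$, that is $u = u_{1} \in \im \Pi_{1}$.

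The step I expect to be the main obstacle is precisely this transfer. One must check that the uniqueness argument of Lemma \ref{a33}, stated there for a single space $\im \Pi_{j}$, applies to the difference $w = u - u_{1}$ of a generalized resonant state of $P$ with an element of $\im \Pi_{1}$. This is legitimate because that argument uses only that $w$ is outgoing and that a fixed power of $P - z$ annihilates $w$, both being linear conditions, hence stable under subtraction; everything else is routine bookkeeping with the residue relations and with Corollary \ref{a37}.
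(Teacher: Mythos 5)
Your argument is correct, and it matches the paper up to the decisive last step: the distorted half ($\Pi_{1}^{\theta}$ is the Riesz projector, $\Pi_{j}^{\theta}$ is, up to sign, $( P_{\theta} - z )^{j-1} \Pi_{1}^{\theta}$, and the generalized eigenspace is $P_{\theta}$-invariant) is exactly the paper's ``standard argument'', and producing $u_{1} \in \im \Pi_{1}$ with $\varphi u_{1} = \varphi u$ from \eqref{a40} is also the paper's step. Where you diverge is in passing from $\varphi u = \varphi u_{1}$ to $u = u_{1}$: the paper never applies unique continuation to the cross-space difference $u - u_{1}$. Instead, for every larger cutoff $\psi$ with $\varphi \prec \psi$ it builds $u_{\psi} \in \im \Pi_{1}$ with $\psi u = \psi u_{\psi}$, and uses Lemma \ref{a33} only inside the single space $\im \Pi_{1}$ (applied to $u_{\varphi} - u_{\psi}$) to see that all the $u_{\psi}$ coincide; then $\psi u = \psi u_{\varphi}$ for every $\psi$ forces $u = u_{\varphi}$. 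You instead rerun the proof of Lemma \ref{a33} on $w = u - u_{1} \in \im \Pi_{j} + \im \Pi_{1}$, i.e.\ you prove a strengthened form of that lemma: injectivity of multiplication by $\varphi$ on the whole sum $\im \Pi_{1} + \cdots + \im \Pi_{N}$. This is legitimate for the reason you give: the hypotheses actually used are linear --- $( P - z )^{N} w = 0$, $w = 0$ on the open set $\{ \varphi \neq 0 \}$ (which, by locality of $P$ away from the black box, propagates to the iterates $( P - z )^{k} w$), and outgoingness of each iterate, which holds because $( P - z )^{k} w \in \im \Pi_{j + k} + \im \Pi_{k + 1}$, so the holomorphic-extension/unique-continuation step runs at every stage of the descending induction. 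The one caveat is that for the undistorted projections the paper only sketches the proof of Lemma \ref{a33} (``is similar''), so your ``verbatim'' claim really extends an argument left implicit there; to make it airtight you should say explicitly where the outgoingness (holomorphic extendability outside $B ( R_{1} )$) of the iterates of $w$ comes from, as above. In exchange, your route avoids the cutoff-exhaustion trick and gives, as a by-product, injectivity of $\varphi$ on the full generalized eigenspace, while the paper's route uses Lemma \ref{a33} strictly as stated and never reopens its proof.
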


\begin{proof}
Since the resolvent of $P_{\theta}$ acts from $L^{2} ( \R^{n} )$ to itself, a standard argument gives $\im \Pi_{j}^{\theta} \subset \im \Pi_{1}^{\theta}$. Consider now $u \in \im \Pi_{j}$. Let $\varphi \in C^{\infty}_{0} ( \R^{n} )$ be such that $\one_{B ( R_{1} )} \prec \varphi$ and $P_{\vartheta}$ be a complex distortion outside the support of $\varphi$. Then, from \eqref{a40}, there exists $u_{\vartheta} \in \im \Pi_{j}^{\vartheta}$ such that $\varphi u = \varphi u_{\vartheta}$. Therefore, using $\im \Pi_{j}^{\vartheta} \subset \im \Pi_{1}^{\vartheta}$ together with \eqref{a40}, there exists $u_{\varphi} \in \im \Pi_{1}$ such that
\begin{equation*}
\varphi u = \varphi u_{\varphi} .
\end{equation*}
Let now $\psi \in C^{\infty}_{0} ( \R^{n} )$ be such that $\varphi \prec \psi$. From the previous construction, $\varphi u_{\psi} = \varphi \psi u_{\psi} = \varphi \psi u = \varphi u = \varphi u_{\varphi}$ and $ u_{\varphi} - u_{\psi} \in \im \Pi_{1}$. Then, Lemma \ref{a33} implies $u_{\varphi} = u_{\psi}$. In other words, for all $\psi \in C^{\infty}_{0}( \R^{n} )$, we have
\begin{equation*}
\psi u = \psi u_{\varphi} .
\end{equation*}
This implies $u = u_{\varphi} \in \im \Pi_{1}$.
\end{proof}

\begin{lemma}\sl \label{a39}
Let $\varphi \in C^{\infty}_{0} ( \R^{n} )$ be such that $\one_{B ( R_{1} )} \prec \varphi$ and such that the distortion occurs outside of the support of $\varphi$. Then, for all $u \in \im \Pi_{1}$, there exists a unique $u_{\theta} \in \im \Pi_{1}^{\theta}$ such that $\varphi u = \varphi u_{\theta}$. Moreover, $( P - z )^{J} u = 0$ if and only if $( P_{\theta} - z )^{J} u_{\theta} = 0$.
\end{lemma}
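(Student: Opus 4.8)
I would prove Lemma \ref{a39} by reducing it to the algebraic relations already available for the spectral projections, namely $(P-z)\Pi_{j}=\Pi_{j+1}$, $(P-z)\Pi_{N}=0$ and their distorted analogues, together with the identity \eqref{a35} and the injectivity statements of Lemma \ref{a33} and Remark \ref{a36}; no new analysis should be needed. The first step is to fix a convenient representation of $u$. Since $\one_{B(R_{1})}\prec\varphi$, Remark \ref{a36} gives $\im\Pi_{1}\varphi=\im\Pi_{1}$, so every $u\in\im\Pi_{1}$ may be written $u=\Pi_{1}\varphi w$ for some $w\in\CH$. I then \emph{define} $u_{\theta}:=\Pi_{1}^{\theta}\varphi w\in\im\Pi_{1}^{\theta}$ and observe, using \eqref{a35}, that $\varphi u_{\theta}=\varphi\Pi_{1}^{\theta}\varphi w=\varphi\Pi_{1}\varphi w=\varphi u$. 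Uniqueness is then immediate: if $u_{\theta}'\in\im\Pi_{1}^{\theta}$ also satisfies $\varphi u_{\theta}'=\varphi u$, then $\varphi(u_{\theta}-u_{\theta}')=0$ with $u_{\theta}-u_{\theta}'\in\im\Pi_{1}^{\theta}$, so $u_{\theta}=u_{\theta}'$ by Lemma \ref{a33}; in particular $u_{\theta}$ does not depend on the choice of $w$. This settles the first assertion.

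For the equivalence I keep the representations $u=\Pi_{1}\varphi w$ and $u_{\theta}=\Pi_{1}^{\theta}\varphi w$ and iterate $(P-z)\Pi_{j}=\Pi_{j+1}$, with the convention $\Pi_{k}=0$ for $k>N$, to obtain $(P-z)^{J}u=\Pi_{1+J}\varphi w$; since $N_{\theta}=N$ by Corollary \ref{a37}, the same computation on the distorted side gives $(P_{\theta}-z)^{J}u_{\theta}=\Pi_{1+J}^{\theta}\varphi w$. When $J\geq N$ both right-hand sides vanish and the equivalence is trivial, so I may assume $1\leq 1+J\leq N$. Then the chain of equivalences
\begin{align*}
(P-z)^{J}u=0 &\iff \Pi_{1+J}\varphi w=0 \iff \varphi\Pi_{1+J}\varphi w=0 \\
&\iff \varphi\Pi_{1+J}^{\theta}\varphi w=0 \iff \Pi_{1+J}^{\theta}\varphi w=0 \iff (P_{\theta}-z)^{J}u_{\theta}=0
\end{align*}
finishes the argument: the second and fourth equivalences express the injectivity of multiplication by $\varphi$ on $\im\Pi_{1+J}$ and on $\im\Pi_{1+J}^{\theta}$ respectively (Lemma \ref{a33}, valid since $1\leq 1+J\leq N=N_{\theta}$ and $\one_{B(R_{1})}\prec\varphi$), while the middle equivalence is \eqref{a35}.

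I do not expect a real obstacle here: once the representation $u=\Pi_{1}\varphi w$ is in place everything is a short computation. The only point that needs care is the bookkeeping around the black-box region and the complex distortion, that is, making sure that \eqref{a35}, Lemma \ref{a33} and Remark \ref{a36} genuinely apply to the cut-off $\varphi$ at hand; this is exactly why the hypotheses $\one_{B(R_{1})}\prec\varphi$ and ``the distortion occurs outside of the support of $\varphi$'' are imposed. All the conceptual content---that $\varphi$ is injective on $\im\Pi_{j}$ and $\im\Pi_{j}^{\theta}$, and that these ranges coincide after truncating by $\varphi$---has already been isolated in Lemma \ref{a33}, Remark \ref{a36} and Corollary \ref{a37}, so the proof merely assembles these facts.
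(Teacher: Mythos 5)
Your argument is correct, and it draws on exactly the same toolbox as the paper (the identity \eqref{a35}, the injectivity Lemma \ref{a33}, Remark \ref{a36}, Corollary \ref{a37}, and the relations $(P-z)\Pi_{j}=\Pi_{j+1}$, $(P-z)\Pi_{N}=0$ with their distorted analogues), but the way you assemble them differs from the paper in the second half. For existence the paper invokes \eqref{a40} directly ($\varphi u\in\im\varphi\Pi_{1}=\im\varphi\Pi_{1}^{\theta}$), which is essentially equivalent to your construction $u=\Pi_{1}\varphi w$, $u_{\theta}:=\Pi_{1}^{\theta}\varphi w$ via Remark \ref{a36} and \eqref{a35}. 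For the equivalence of $(P-z)^{J}u=0$ and $(P_{\theta}-z)^{J}u_{\theta}=0$, the paper uses Lemma \ref{a38} to place $(P-z)^{J}u$ in $\im\Pi_{1}$ and $(P_{\theta}-z)^{J}u_{\theta}$ in $\im\Pi_{1}^{\theta}$, and then relies on the identity $\varphi(P-z)^{J}u=\varphi(P_{\theta}-z)^{J}u_{\theta}$, which is left implicit and really rests on locality: $u=u_{\theta}$ on $\{\varphi\neq 0\}$ and $P=P_{\theta}$ near $\supp\varphi$. You avoid both Lemma \ref{a38} and this locality step by keeping the explicit representations $(P-z)^{J}u=\Pi_{1+J}\varphi w$ and $(P_{\theta}-z)^{J}u_{\theta}=\Pi_{1+J}^{\theta}\varphi w$ and transferring through $\varphi\Pi_{1+J}\varphi=\varphi\Pi_{1+J}^{\theta}\varphi$, i.e.\ \eqref{a35} at level $1+J$, at the price of invoking Lemma \ref{a33} at that level and $N=N_{\theta}$ from Corollary \ref{a37} (your separate treatment of $J\geq N$ is the right bookkeeping). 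This makes your version purely algebraic and arguably more self-contained on the one point the paper leaves tacit, while the paper's route needs injectivity only on $\im\Pi_{1}$ and $\im\Pi_{1}^{\theta}$; also note that the uniqueness you establish is what guarantees your $u_{\theta}$ does not depend on the choice of $w$, a point you correctly flag.
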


\begin{proof}
Let $u \in \im \Pi_{1}$. From \eqref{a40}, there exists $u_{\theta} \in \im \Pi_{j}^{\theta}$ such that $\varphi u = \varphi u_{\theta}$. Thanks to Lemma \ref{a33}, this $u_{\theta}$ is unique. Lemma \ref{a38}, $( P - z ) \Pi_{j} = \Pi_{j + 1}$ and $( P_{\theta} - z ) \Pi_{j}^{\theta} = \Pi_{j + 1}^{\theta}$ imply $( P - z )^{J} u \in \im \Pi_{1}$ and $( P_{\theta} - z )^{J} u_{\theta} \in \im \Pi_{1}^{\theta}$. Then, from Lemma \ref{a33}, $( P - z )^{J} u = 0$ iff $\varphi ( P - z )^{J} u = \varphi ( P_{\theta} - z )^{J} u_{\theta} = 0$ iff $( P_{\theta} - z )^{J} u_{\theta} = 0$.
\end{proof}

{\bf Acknowledgments.} The authors would like to thank the referee for helpful comments, making the paper more understandable.

\bibliographystyle{amsplain}

\end{document}